\pdfoutput=1
\documentclass[11pt,reqno]{amsart}
\usepackage{graphics,amsmath,amssymb,epsfig,stmaryrd,color,amsaddr}
\usepackage{subfigure,amsfonts,geometry,array}
\usepackage{graphicx}
\usepackage{float}
\usepackage{amsthm}
\usepackage{flafter}
\usepackage[svgnames]{xcolor}
\usepackage{bbm}
\usepackage{booktabs}
\usepackage{hyperref}

\usepackage{natbib}

\usepackage{enumerate}
\usepackage{multirow}

\makeatletter
\renewcommand\@endtheorem{\vvv@endmarker\endtrivlist\@endpefalse}
\newcommand\vvv@endmarker{%
  {\nobreak\hfil\penalty50
  \hskip2em\vadjust{}\nobreak\hfil\openbox
  \parfillskip=0pt \finalhyphendemerits=0 \par
  \penalty 10000 \parskip=0pt\noindent}\ignorespaces}
\makeatother

\theoremstyle{plain}

\newtheorem*{assumption*}{Assumption}
\newtheorem{assumption}{Assumption}

\newtheorem{lemma}{Lemma}

\newcommand\independent{\protect\mathpalette{\protect\independenT}{\perp}}
\def\independenT#1#2{\mathrel{\rlap{$#1#2$}\mkern2mu{#1#2}}}
\numberwithin{equation}{section}

\begin{document}
\nonstopmode

\title[Stochastic Frontier meets Breakdown Frontier]{Stochastic Frontier meets Breakdown Frontier}
\author{Santiago Acerenza}
\address{Universidad ORT Uruguay} 
\author{Francisco Rosas}
\address{Universidad ORT Uruguay and cinve} 
\noindent \date{\scriptsize{ The present version is as of \today. We thank Moriah B. Bostian, Subal C. Kumbhakar, Federico García-Suárez, and two anonymous readers for their valuable comments. 
}}

\begin{abstract}
This paper studies sensitivity analysis in stochastic frontier models by developing relaxations of the baseline assumptions imposed on the latent inefficiency and noise components, and characterize bounds for a benchmark technical-efficiency object under such relaxations. We then derive the associated breakdown frontier for conclusions about conditional technical efficiency and illustrate the procedure using a well-known dataset. We show the estimation and inference of the breakdown frontier. We also extend the analysis for widely used alternative specifications of the stochastic frontier error structure, that is, the Normal-Truncated Normal, Normal-Exponential, and Normal-Half Normal cases. Finally, we suggest avenues for extending this analysis under heteroskedasticity, the relaxation of input exogeneity, and applications using panel data. Code for empirical implementation is also provided.
\end{abstract}
 \maketitle
{\footnotesize \textbf{Keywords}: Stochastic Frontier Analysis, Technical inefficiency, Breakdown Frontier, Partial Identification.

\textbf{JEL subject classification}:  C18, C2, C23, C51, D24 }

\newpage

\section{Introduction}

This paper studies sensitivity analysis in stochastic frontier models. Our guiding question is the following: how far can the baseline assumptions on the latent components of the model be relaxed while still sustaining a given conclusion about productive performance, conditional on the observed data? Since the seminal contributions of \cite{aigner1977formulation} and \cite{meeusen1977efficiency}, stochastic frontier analysis has been widely used to study productivity and efficiency across many economic sectors, including banking \citep{ferrier1990measuring,adams1999semiparametric,kumbhakar2005measuring,malikov2016cost}, healthcare \citep{zuckerman1994measuring,rosko2001cost,greene2004distinguishing,mutter2013investigating,comans2020cost}, and agriculture \citep{Bravo2007,Bravo2017,Trestini,Qushim2013,Ozden,Otieno,Gatti,Nwigwe,Qushim2018,Martinez,xin2025environmental,Lanfranco2013,Garcia2019,Garcia2022,Aguirre2024a,Aguirre2024b}. Stochastic frontier methods remain central in empirical work, and many important extensions have been developed since the original formulations in order to accommodate richer data structures and more flexible behavioural assumptions. For example, \cite{schmidt1984production,greene2005fixed,colombi2014closed} and \cite{kumbhakar2014technical} developed panel-data extensions, including decompositions into persistent and transitory inefficiency components. \cite{cornwell1990production,wang2002one,caudill1995frontier} considered models with determinants of inefficiency, while \cite{simar2017nonparametric,wang2024flexible,centorrino2024nonparametric,zheng2024robust} extended the methodology to semiparametric and nonparametric stochastic frontier models. See \cite{nguyen2022efficiency} for a thorough review of the literature and software implementation details.

Despite these important advances, sensitivity analysis per se remains largely absent from the stochastic frontier literature. Following \cite{masten2020inference}, we use the term sensitivity analysis to refer to a multidimensional framework that evaluates the robustness of empirical conclusions to simultaneous relaxations of identifying assumptions. Standard approaches begin by imposing assumptions under which a parameter becomes identified. By contrast, the breakdown-frontier perspective begins with a conclusion of interest and asks which combinations of assumptions are still sufficient to support that conclusion, given the observed data. This is formalized through a robust region and its associated breakdown frontier, which summarize the trade-offs between relaxing different assumptions while preserving the target conclusion.

This perspective is relevant for at least two reasons. First, all stochastic frontier models rely on assumptions about the latent inefficiency and noise components, even when those assumptions are relatively weak. Second, a breakdown-frontier analysis quantifies how much misspecification a benchmark model can tolerate before a given empirical conclusion is no longer supported by the data. In practice, this provides a useful complement to conventional stochastic frontier estimation: rather than replacing the null model, it shows how fragile or robust the corresponding conclusions are to controlled departures from its maintained assumptions.

A useful way to interpret the breakdown frontier is therefore as a \emph{robustness object}, rather than as a rule for selecting a single pair of sensitivity parameters. The frontier summarizes the full set of jointly admissible relaxations of the assumptions on the latent inefficiency and noise distributions that remain compatible with a benchmark conclusion about technical efficiency. In that sense, the contribution of the analysis is not to produce an alternative point estimate of efficiency, but to quantify the extent to which a benchmark stochastic frontier conclusion can withstand simultaneous misspecification in the latent components of the model.

Following \cite{masten2020inference}, the analysis requires six elements: (i) a parameter of interest, (ii) a set of baseline assumptions, (iii) a class of relaxations indexed by sensitivity parameters, (iv) bounds for the parameter of interest as a function of those sensitivity parameters, (v) a robust region and a breakdown frontier for a conclusion of interest, and (vi) estimation and inference procedures for these objects. We implement this framework with particular emphasis on the canonical model of \cite{aigner1977formulation}, although the same logic can be adapted, with appropriate modifications, to many of the stochastic frontier specifications used in practice.

A key conceptual distinction in our paper is the following. The technological frontier is estimated under a benchmark, or null, model. The sensitivity analysis is then conducted post-estimation on the implied technical-efficiency object. Thus, the paper does not re-estimate a different frontier under every possible relaxation. Instead, it studies how conclusions about technical efficiency vary as one weakens the maintained assumptions used to interpret the benchmark frontier model.

Compared with standard stochastic frontier analysis, the contribution of our approach is not to deliver a different or superior point estimator of technical efficiency under correct specification. Rather, the gain is that it adds a quantitative robustness layer to the conventional analysis. Standard SFA delivers a benchmark value of technical efficiency under maintained distributional assumptions. Our procedure asks how much simultaneous misspecification in the latent inefficiency and noise distributions can be tolerated before a given efficiency conclusion is no longer supported by the data. In this sense, the breakdown frontier complements standard SFA by separating the level of a benchmark efficiency estimate from the robustness of the conclusion drawn from it.

The rest of the paper proceeds as follows. Section 2 introduces the framework, defines the benchmark technical-efficiency object, and derives bounds under relaxed distributional assumptions. It then characterizes the robust region and the breakdown frontier in the normal--truncated normal case, and develops estimation and inference procedures for the resulting frontier. Section 3 provides an empirical illustration. The remaining sections discuss extensions and collect technical proofs and auxiliary results.

\section{Framework}

\subsection{Basic Set-up and the parameter of interest}

Let \(y_i\) be the logarithm of an output of interest. The canonical stochastic frontier model is
\[
y_i=f(\theta_x,x_i)+v_i-u_i,
\]
where \(f(\theta_x,x_i)\) is a known function of the inputs \(x_i\), up to parameters \(\theta_x\in\mathbb{R}^p\), \(v_i\) is a symmetric noise term, and \(u_i\) is the nonnegative latent inefficiency term, with \(u_i\geq 0\).

Define the composite error as
\[
\varepsilon_i=v_i-u_i.
\]
A standard object of interest in stochastic frontier analysis is
\[
E[\exp(-u_i)\mid \varepsilon_i].
\]
This object is naturally interpreted as \emph{technical efficiency} conditional on the  composite error. Accordingly, throughout the paper we focus on
\[
TE(\varepsilon):=E[\exp(-u_i)\mid \varepsilon_i=\varepsilon].
\]

\subsection{The structure of distributional assumptions and its relaxations}

It is common in the stochastic frontier literature to impose additional structure on \(v_i\) and \(u_i\) in order to identify \(TE(\varepsilon)\). A typical set of assumptions includes normality of \(v_i\) and some one-sided distribution for \(u_i\), potentially together with mean restrictions and heteroskedasticity.

The intuition behind the sensitivity analysis is the following. Suppose there exist two nonnegative constants \(b\) and \(c\) that measure, respectively, the distance between the assumed and true distributions of the noise term \(v_i\) and the inefficiency term \(u_i\). Then, in the context of the model above, suppose that under the null model the conditional technical efficiency for a given value of the composite error, \(\varepsilon\), is estimated to be \(\tau_0\in[0,1]\). The goal of the sensitivity analysis is to determine which combinations of \(c\) and \(b\) still sustain the conclusion that the relevant technical-efficiency object remains at least \(\tau_0\). This gives rise to the robust region (RR), namely the set of assumption relaxations under which the benchmark conclusion still holds. Conditional on a given value of one relaxation parameter, the corresponding threshold value of the other defines the breakdown frontier, illustrated in Figure \ref{fig1}.

\begin{figure}[H]
\centering
\includegraphics[scale=0.1]{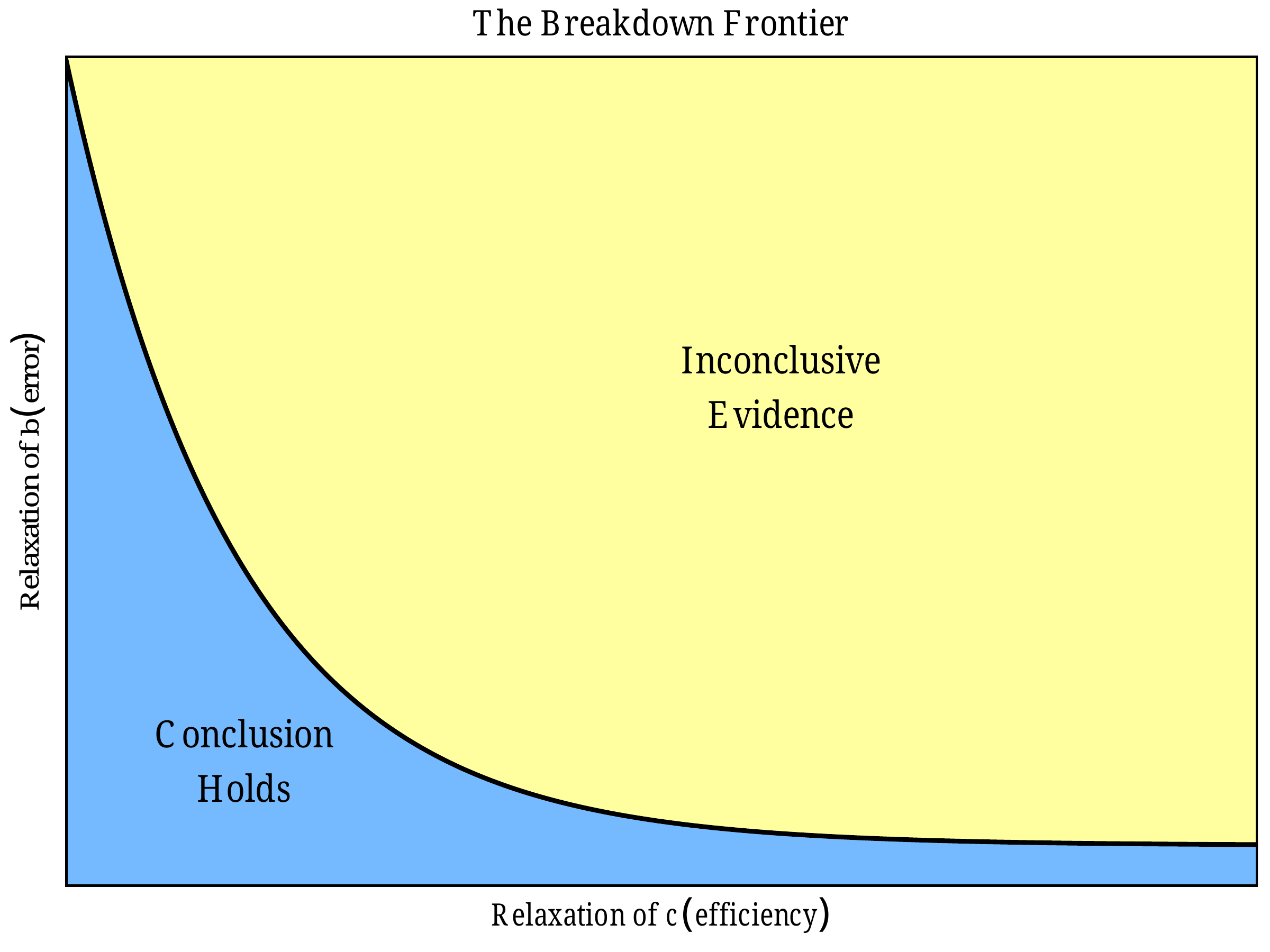}
\caption{The breakdown frontier. Source: Own elaboration based on \cite{masten2020inference}.}
\label{fig1}
\end{figure}

The blue region represents the set of combinations of \(c\) and \(b\), that is, violations of the baseline distributional assumptions, for which the benchmark conclusion about technical efficiency continues to hold. The yellow regions correspond to combinations for which the data no longer support that conclusion. The negative slope of the frontier reflects a trade-off between relaxing assumptions on the inefficiency component and relaxing assumptions on the noise component. Its curvature indicates how this trade-off varies across different degrees of misspecification.

Our starting point is therefore to index deviations from the null model by distances from the researcher-imposed benchmark distributions.

\begin{assumption}[Parametric distance]\label{AsPara}
Let \(f_{u_i}(u)\) denote the density of \(u_i\), and let \(f_{v_i\mid u_i}(v\mid u)\) denote the conditional density of \(v_i\) given \(u_i\). Further, let \(f_{u_i}(u;\theta_u)\) and \(f_{v_i\mid u_i}(v\mid u;\theta_v)\) denote researcher-specified parametric benchmark densities. Then
\begin{align*}
\sup_{u\geq 0}\big|f_{u_i}(u)-f_{u_i}(u;\theta_u)\big| &\leq c,\\
\sup_{v\in\mathbb{R}}\big|f_{v_i\mid u_i}(v\mid u)-f_{v_i\mid u_i}(v\mid u;\theta_v)\big| &\leq b
\qquad \forall u\geq 0.
\end{align*}
\end{assumption}

The sensitivity parameters \(b\) and \(c\) are therefore uniform distances between benchmark and true densities. As such, they are not probabilities, percentages, or quantities directly comparable to the technical-efficiency object itself. Their empirical magnitude depends on the scale and peak of the corresponding benchmark densities, which also helps explain why the numerical magnitudes of \(b\) and \(c\) need not be similar in applications. For this reason, the empirical analysis below complements the raw frontier representation with normalized interpretations of the sensitivity parameters when discussing magnitudes.

The case \(c=0\) and \(b=0\) corresponds to the null model. In particular, if \(v_i\) is normal and \(u_i\) is truncated normal, we recover the benchmark setting of \cite{aigner1977formulation}. Moving away from \((c,b)=(0,0)\) weakens those baseline distributional assumptions. For interpretative purposes, it can also be useful to work with normalized sensitivity parameters. A natural normalization is to scale each sensitivity parameter by the peak of its corresponding benchmark density, for example
\[
\tilde c=\frac{c}{\sup_{u\geq 0} f_u(u;\theta_u)},
\qquad
\tilde b=\frac{b}{\sup_{v\in\mathbb{R}} f_{v\mid u}(v\mid u;\theta_v)}.
\]
These normalized quantities do not alter the underlying robust region or breakdown frontier; they simply express the relaxations relative to the scale of the benchmark latent densities and can therefore facilitate interpretation.

Let \((x)_+:=\max\{x,0\}\). Under Assumption \ref{AsPara}, the true densities satisfy the pointwise bounds
\[
(f_{u_i}(u;\theta_u)-c)_+ \leq f_{u_i}(u) \leq f_{u_i}(u;\theta_u)+c,
\]
and
\[
(f_{v_i\mid u_i}(v\mid u;\theta_v)-b)_+ \leq f_{v_i\mid u_i}(v\mid u) \leq f_{v_i\mid u_i}(v\mid u;\theta_v)+b,
\qquad \forall (v,u)\in\mathbb{R}\times[0,\infty).
\]
It is important to stress that the clipped lower envelopes \((f_{u_i}(u;\theta_u)-c)_+\) and \((f_{v_i\mid u_i}(v\mid u;\theta_v)-b)_+\) are \emph{not} introduced as candidate densities. They are only lower envelope functions implied by the sup-norm relaxation together with nonnegativity of the true densities. Therefore, they do not need to integrate to one. Their role is purely to provide sharper lower bounds than the raw envelopes \(f_{u_i}(u;\theta_u)-c\) and \(f_{v_i\mid u_i}(v\mid u;\theta_v)-b\).

\subsection{Bounds for the parameter of interest under relaxed assumptions}

In this section, we derive bounds for the parameter of interest under relaxations of the benchmark assumptions. These bounds characterize the identified region for conditional technical efficiency under Assumption \ref{AsPara}.

For notational convenience, define the upper envelope functions
\[
\overline f_u(u;c):=f_{u_i}(u;\theta_u)+c,
\qquad
\overline f_{v\mid u}(v\mid u;b):=f_{v_i\mid u_i}(v\mid u;\theta_v)+b,
\]
and the clipped lower envelope functions
\[
\underline f_u(u;c):=(f_{u_i}(u;\theta_u)-c)_+,
\qquad
\underline f_{v\mid u}(v\mid u;b):=(f_{v_i\mid u_i}(v\mid u;\theta_v)-b)_+.
\]
Also define
\begin{align*}
\overline N(\varepsilon;c,b)
&:=
\int_0^\infty \exp(-u)\,
\overline f_{v\mid u}(\varepsilon+u\mid u;b)\,
\overline f_u(u;c)\,du,\\
\underline N(\varepsilon;c,b)
&:=
\int_0^\infty \exp(-u)\,
\underline f_{v\mid u}(\varepsilon+u\mid u;b)\,
\underline f_u(u;c)\,du,\\
\overline D(\varepsilon;c,b)
&:=
\int_0^\infty
\overline f_{v\mid u}(\varepsilon+u\mid u;b)\,
\overline f_u(u;c)\,du,\\
\underline D(\varepsilon;c,b)
&:=
\int_0^\infty
\underline f_{v\mid u}(\varepsilon+u\mid u;b)\,
\underline f_u(u;c)\,du.
\end{align*}

\begin{lemma}\label{LemmaIdentifiedset}
Suppose Assumption \ref{AsPara} holds. If \(\underline D(\varepsilon;c,b)>0\), then
\begin{equation}\label{EqBP5}
\frac{\underline N(\varepsilon;c,b)}{\overline D(\varepsilon;c,b)}
\leq
E[\exp(-u_i)\mid \varepsilon_i=\varepsilon]
\leq
\frac{\overline N(\varepsilon;c,b)}{\underline D(\varepsilon;c,b)}.
\end{equation}
\end{lemma}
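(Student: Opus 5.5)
The plan is to write the conditional expectation as a ratio of two integrals against the joint density of \((u_i,\varepsilon_i)\). Then I will bound the numerator and the denominator separately using the pointwise envelopes that follow from Assumption \ref{AsPara}.

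First, the change of variables. Since \(\varepsilon_i=v_i-u_i\), the map \((u,v)\mapsto(u,\varepsilon)=(u,v-u)\) has unit Jacobian. Hence the joint density of \((u_i,\varepsilon_i)\) is
\[
f_{u,\varepsilon}(u,\varepsilon)=f_{v_i\mid u_i}(\varepsilon+u\mid u)\,f_{u_i}(u),
\]
and the marginal density of \(\varepsilon_i\) is
\[
f_\varepsilon(\varepsilon)=\int_0^\infty f_{v_i\mid u_i}(\varepsilon+u\mid u)f_{u_i}(u)\,du.
\]
By Bayes' rule, whenever \(f_\varepsilon(\varepsilon)>0\),
\[
E[\exp(-u_i)\mid\varepsilon_i=\varepsilon]=\frac{N(\varepsilon)}{f_\varepsilon(\varepsilon)},
\qquad
N(\varepsilon):=\int_0^\infty e^{-u}f_{v_i\mid u_i}(\varepsilon+u\mid u)f_{u_i}(u)\,du.
\]

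Second, the pointwise sandwich. Assumption \ref{AsPara} together with nonnegativity of the true densities gives, for every \(u\ge0\),
\[
0\le\underline f_{v\mid u}(\varepsilon+u\mid u;b)\le f_{v_i\mid u_i}(\varepsilon+u\mid u)\le\overline f_{v\mid u}(\varepsilon+u\mid u;b),
\]
and likewise \(0\le\underline f_u(u;c)\le f_{u_i}(u)\le\overline f_u(u;c)\). All of these quantities are nonnegative, so the bounds can be multiplied factor by factor:
\[
\underline f_{v\mid u}\,\underline f_u\le f_{v_i\mid u_i}f_{u_i}\le\overline f_{v\mid u}\,\overline f_u.
\]
Multiplying by the positive weight \(e^{-u}\) (or by \(1\)) and integrating over \([0,\infty)\) yields
\[
\underline N\le N\le\overline N
\qquad\text{and}\qquad
\underline D\le f_\varepsilon(\varepsilon)\le\overline D.
\]
Monotonicity of the Lebesgue integral applies here because all integrands are nonnegative and measurable; the clipped envelopes are measurable since \((\cdot)_+\) is continuous.

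Third, the ratio. The hypothesis \(\underline D>0\) implies \(f_\varepsilon(\varepsilon)\ge\underline D>0\), so the conditional expectation is well defined, and also \(\overline D\ge\underline D>0\). Since \(N\ge0\), I conclude
\[
\frac{\underline N}{\overline D}\le\frac{N}{\overline D}\le\frac{N}{f_\varepsilon}\le\frac{\overline N}{f_\varepsilon}\le\frac{\overline N}{\underline D}.
\]
If \(\overline N\) or \(\overline D\) is infinite, the inequalities still hold in the extended sense, with the left side read as \(0\) when \(\overline D=\infty\).

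The only delicate step is the first one: justifying that conditioning on the null event \(\{\varepsilon_i=\varepsilon\}\) is given by this particular density ratio. This is a statement about a version of the conditional expectation. I would handle it by taking the ratio as the defining version wherever \(f_\varepsilon(\varepsilon)>0\), which is standard in the stochastic frontier literature. The hypothesis \(\underline D>0\) guarantees positivity of \(f_\varepsilon\) at the point of interest, so no further argument is needed.
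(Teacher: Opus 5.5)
Your proof is correct and follows essentially the same argument as the paper. The paper bounds the posterior density $f_{u_i\mid\varepsilon_i}(u\mid\varepsilon)$ pointwise between $\underline f_{v\mid u}\underline f_u/\overline D$ and $\overline f_{v\mid u}\overline f_u/\underline D$ and then integrates against $e^{-u}$, whereas you integrate first and divide afterwards; your extended-real remark is also worth keeping, since for $b,c>0$ the term $bc\int_0^\infty du$ makes $\overline D(\varepsilon;c,b)=\infty$, so the lemma's lower bound is then trivially $0$.
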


Thus, for any given realization \(\varepsilon\) of the composite error, Lemma \ref{LemmaIdentifiedset} delivers lower and upper bounds for the corresponding conditional technical-efficiency object. These bounds are the basis for the robust-region and breakdown-frontier analysis developed below. The proof of Lemma \ref{LemmaIdentifiedset} can be found in Appendix \ref{Appproofs}

\subsection{An application to the Normal--Truncated Normal model: using the bounds to define the breakdown frontier for a conclusion of interest}\label{sec2.4}

Suppose that \(u_i\) and \(v_i\) are independent, and assume that
\[
u_i\sim TN(\mu,\sigma_u^2),
\qquad
v_i\sim N(0,\sigma_v^2).
\]
Then
\begin{align*}
f_{u_i}(u;\theta_u)
&=
\frac{1}{\sqrt{2\pi\sigma_u^2}}
\frac{\exp\!\left(-\frac{(u-\mu)^2}{2\sigma_u^2}\right)}
{1-\Phi\!\left(\frac{-\mu}{\sigma_u}\right)},
\\
f_{v_i\mid u_i}(v\mid u;\theta_v)
&=
\frac{1}{\sqrt{2\pi\sigma_v^2}}
\exp\!\left(-\frac{v^2}{2\sigma_v^2}\right).
\end{align*}

Define
\[
\int_{0}^{\infty}\exp(-u)\,f_{v_i\mid u_i}(\varepsilon+u\mid u;\theta_v)\,du
\equiv \Delta_1(\varepsilon;\sigma_v),
\]
\[
\int_{0}^{\infty}\exp(-u)\,f_{v_i\mid u_i}(\varepsilon+u\mid u;\theta_v)f_{u_i}(u;\theta_u)\,du
\equiv \Delta_1(\varepsilon;\sigma_v,\sigma_u,\mu),
\]
\[
\int_{0}^{\infty}\exp(-u)\,f_{u_i}(u;\theta_u)\,du
\equiv \Delta_1(\sigma_u,\mu),
\]
\[
\int_{0}^{\infty}f_{v_i\mid u_i}(\varepsilon+u\mid u;\theta_v)\,du
\equiv \Delta_2(\varepsilon;\sigma_v),
\]
\[
\int_{0}^{\infty}f_{v_i\mid u_i}(\varepsilon+u\mid u;\theta_v)f_{u_i}(u;\theta_u)\,du
\equiv \Delta_2(\varepsilon;\sigma_v,\sigma_u,\mu).
\]
Closed-form expressions for these objects are given in Lemma \ref{LemmaAuxiliar}. We extend these calculations to alternative null models for \(u_i\) in Appendix \ref{AppAltNulls}.

For the upper-envelope objects, independence and the unit-mass property of benchmark densities imply
\begin{align*}
\overline N(\varepsilon;c,b)
&=
\Delta_1(\varepsilon;\sigma_v,\sigma_u,\mu)
+c\,\Delta_1(\varepsilon;\sigma_v)
+b\,\Delta_1(\sigma_u,\mu)
+bc,\\
\overline D(\varepsilon;c,b)
&=
\Delta_2(\varepsilon;\sigma_v,\sigma_u,\mu)
+c\,\Delta_2(\varepsilon;\sigma_v)
+b.
\end{align*}
For the clipped lower-envelope objects, define
\begin{align*}
\underline N_{TN}(\varepsilon;c,b)
&:=
\int_0^\infty \exp(-u)
\big(f_{v_i\mid u_i}(\varepsilon+u\mid u;\theta_v)-b\big)_+
\big(f_{u_i}(u;\theta_u)-c\big)_+\,du,\\
\underline D_{TN}(\varepsilon;c,b)
&:=
\int_0^\infty
\big(f_{v_i\mid u_i}(\varepsilon+u\mid u;\theta_v)-b\big)_+
\big(f_{u_i}(u;\theta_u)-c\big)_+\,du.
\end{align*}

\subsubsection{The identified region}

By direct application of equation \eqref{EqBP5},
\begin{equation}\label{NormalIdentifiedset}
\frac{
\underline N_{TN}(\varepsilon;c,b)
}{
\Delta_2(\varepsilon;\sigma_v,\sigma_u,\mu)
+c\,\Delta_2(\varepsilon;\sigma_v)
+b
}
\leq
E[\exp(-u_i)\mid \varepsilon_i=\varepsilon]
\leq
\frac{
\Delta_1(\varepsilon;\sigma_v,\sigma_u,\mu)
+c\,\Delta_1(\varepsilon;\sigma_v)
+b\,\Delta_1(\sigma_u,\mu)
+bc
}{
\underline D_{TN}(\varepsilon;c,b)
},
\end{equation}
provided \(\underline D_{TN}(\varepsilon;c,b)>0\).

\subsubsection{The breakdown frontier}

We now define the breakdown frontier in this context. We also introduce the robust region, that is, the set of relaxations under which a benchmark conclusion about technical efficiency remains valid.

Suppose that under the benchmark stochastic frontier model, corresponding to \(c=0\) and \(b=0\), one obtains
\[
E[\exp(-u_i)\mid \varepsilon_i=\varepsilon]=\tau_0,
\]
for a fixed \(\tau_0\in[0,1]\). The sensitivity analysis then starts from the conclusion
\[
E[\exp(-u_i)\mid \varepsilon_i=\varepsilon]\geq \tau_0.
\]
Relative to the baseline assumptions, what are the weakest assumptions that still allow us to draw this conclusion, given the observed distribution of the data? Since larger values of \(c\) and \(b\) correspond to weaker assumptions, the question becomes: what are the largest values of \(c\) and \(b\) such that we can still conclude that
\[
E[\exp(-u_i)\mid \varepsilon_i=\varepsilon]\geq \tau_0?
\]

Following \cite{masten2020inference}, the robust region is the set of relaxations for which the lower bound in \eqref{NormalIdentifiedset} remains above \(\tau_0\). Hence,
\begin{equation}\label{RobustRegion}
RR=
\left\{
(c,b)\in[0,\infty)^2:
\frac{
\underline N_{TN}(\varepsilon;c,b)
}{
\Delta_2(\varepsilon;\sigma_v,\sigma_u,\mu)
+c\,\Delta_2(\varepsilon;\sigma_v)
+b
}
\geq \tau_0
\right\}.
\end{equation}

Note that when \(c=0\) and \(b=0\),
\[
\underline N_{TN}(\varepsilon;0,0)=\Delta_1(\varepsilon;\sigma_v,\sigma_u,\mu),
\qquad
\overline D(\varepsilon;0,0)=\Delta_2(\varepsilon;\sigma_v,\sigma_u,\mu),
\]
so
\[
\tau_0=
\frac{\Delta_1(\varepsilon;\sigma_v,\sigma_u,\mu)}
{\Delta_2(\varepsilon;\sigma_v,\sigma_u,\mu)}.
\]
Hence, the benchmark model is contained in the robust region and \(RR\neq\emptyset\).

The robust region and its boundary should be interpreted as robustness summaries, not as devices for selecting a single preferred pair \((c,b)\). Their role is to characterize which combinations of misspecification in the inefficiency and noise distributions remain compatible with the technical-efficiency conclusion of interest. In particular, the frontier is useful for comparing robustness across different values of the composite error, different benchmark conclusions \(\tau_0\), and different benchmark specifications of the latent components.

The breakdown frontier is the boundary of the robust region, namely
\begin{equation}\label{Breakdownfrontier}
BF(\tau_0)=
\left\{
(c,b)\in[0,\infty)^2:
\underline N_{TN}(\varepsilon;c,b)
-
\tau_0\Big(
\Delta_2(\varepsilon;\sigma_v,\sigma_u,\mu)
+c\,\Delta_2(\varepsilon;\sigma_v)
+b
\Big)
=0
\right\}.
\end{equation}

Unlike the representation obtained from the raw linear envelopes, the clipped lower-envelope formulation generally does not admit a closed-form expression for \(b\) as an explicit function of \(c\). Accordingly, the breakdown frontier is characterized implicitly by \eqref{Breakdownfrontier} and can be computed numerically for each fixed value of \(c\).

\subsection{The Normal--Truncated Normal model continued: estimation and inference for the breakdown frontier}

In this section, we study estimation and inference for the breakdown frontier defined in Section \ref{sec2.4}. We extend the analysis to alternative specifications of \(u_i\) in Appendix \ref{AppAltNulls_25_detail}. A key conceptual point is that the technological frontier is estimated under the benchmark, or ``null,'' model, that is, the model with \(c=0\) and \(b=0\). The sensitivity analysis is then conducted post-estimation on the corresponding technical-efficiency object.

Accordingly, we estimate
\[
\vartheta=(\theta_x,\mu,\sigma_v,\sigma_u)\in\mathbb{R}^{p+3}.
\]
For an i.i.d.\ sample of size \(n\), the log-likelihood is
\begin{align}
LnL
&=
\sum_{i=1}^n \ln f(y_i,x_i;\vartheta)
\nonumber\\
&=
\sum_{i=1}^n
\Bigg(
-\frac{1}{2}\ln(2\pi)
-\ln\Big(\sqrt{\sigma_v^2+\sigma_u^2}\Big)
-\ln\Phi\Big(\frac{\mu}{\sigma_u}\Big)
\nonumber\\
&\qquad\qquad
+\ln\Phi\Bigg(
\frac{
\left(1-\frac{\sigma_u^2}{\sigma_u^2+\sigma_v^2}\right)\mu
-
\frac{\sigma_u^2}{\sigma_u^2+\sigma_v^2}\big(y_i-f(\theta_x,x_i)\big)
}{
\sqrt{(\sigma_u^2+\sigma_v^2)\frac{\sigma_u^2}{\sigma_u^2+\sigma_v^2}\left(1-\frac{\sigma_u^2}{\sigma_u^2+\sigma_v^2}\right)}
}
\Bigg)
-\frac{1}{2}\Bigg(
\frac{y_i-f(\theta_x,x_i)+\mu}{\sqrt{\sigma_v^2+\sigma_u^2}}
\Bigg)^2
\Bigg).
\label{LogLikelihood}
\end{align}
Let
\[
\widehat{\vartheta}_n=(\widehat{\theta}_{xn},\widehat{\mu}_n,\widehat{\sigma}_{vn},\widehat{\sigma}_{un})
\]
denote the corresponding maximum-likelihood estimator.

Under standard regularity conditions,
\[
\sqrt{n}\big(\widehat{\vartheta}_n-\vartheta\big)\xrightarrow{d}N(\boldsymbol{0},\mathcal{V}_{\vartheta}),
\]
where the asymptotic variance has the usual sandwich form
\[
\mathcal{V}_{\vartheta}
=
\mathcal{H}_{\vartheta}^{-1}\mathcal{J}_{\vartheta}\mathcal{H}_{\vartheta}^{-1},
\]
with
\[
\mathcal{H}_{\vartheta}
=
-E\left[
\frac{\partial^2}{\partial \vartheta\,\partial \vartheta'}\ln f(y_i,x_i;\vartheta)
\right],
\qquad
\mathcal{J}_{\vartheta}
=
E\left[
\left(
\frac{\partial}{\partial \vartheta}\ln f(y_i,x_i;\vartheta)
\right)
\left(
\frac{\partial}{\partial \vartheta}\ln f(y_i,x_i;\vartheta)
\right)'
\right].
\]
Under correct specification, the information equality implies \(\mathcal{J}_{\vartheta}=\mathcal{H}_{\vartheta}\), so that \(\mathcal{V}_{\vartheta}=\mathcal{H}_{\vartheta}^{-1}\).

A consistent estimator of the asymptotic variance is
\[
\widehat{\mathcal{V}}_{\vartheta,n}
=
\widehat{\mathcal{H}}_{\vartheta,n}^{-1}
\widehat{\mathcal{J}}_{\vartheta,n}
\widehat{\mathcal{H}}_{\vartheta,n}^{-1},
\]
where
\[
\widehat{\mathcal{H}}_{\vartheta,n}
=
-\frac{1}{n}\sum_{i=1}^n
\left[
\frac{\partial^2}{\partial \widehat{\vartheta}_n\,\partial \widehat{\vartheta}_n'}\ln f(y_i,x_i;\widehat{\vartheta}_n)
\right],
\qquad
\widehat{\mathcal{J}}_{\vartheta,n}
=
\frac{1}{n}\sum_{i=1}^n
\widehat{s}_i\widehat{s}_i',
\]
with
\[
\widehat{s}_i=
\frac{\partial}{\partial \vartheta}\ln f(y_i,x_i;\widehat{\vartheta}_n).
\]

To conduct inference on the breakdown frontier under the clipped-envelope formulation, we smooth two nonsmooth objects: the positive-part operator used in the clipped density envelopes, and the nonnegativity restriction on the frontier parameter \(b\).

For \(x\geq 0\) and \(a\geq 0\), define
\begin{equation}\label{eq:smoothclip}
S_{\kappa}(x;a)
:=
\kappa\log\!\left(1+\exp\!\left(\frac{x-a}{\kappa}\right)\right)
-
\kappa\log\!\left(1+\exp\!\left(\frac{-a}{\kappa}\right)\right),
\qquad \kappa>0.
\end{equation}
This operator satisfies \(S_{\kappa}(0;a)=0\), \(S_{\kappa}(x;a)\geq 0\), and
\[
S_{\kappa}(x;a)\to (x-a)_+
\qquad\text{as}\qquad \kappa\downarrow 0.
\]

Next, introduce an unrestricted scalar \(\beta\in\mathbb R\), and map it into a nonnegative relaxation parameter through the smooth positive transformation
\begin{equation}\label{eq:softplusb}
B_{\eta}(\beta)
:=
\eta\log\!\left(1+\exp\!\left(\frac{\beta}{\eta}\right)\right),
\qquad \eta>0.
\end{equation}
For each fixed \(\eta>0\), \(B_{\eta}(\beta)\) is strictly positive and continuously differentiable in \(\beta\), with derivative
\begin{equation}\label{eq:softplus_derivative}
B_{\eta}'(\beta)
=
\frac{1}{1+\exp(-\beta/\eta)}.
\end{equation}
Moreover,
\[
B_{\eta}(\beta)\to \beta_+
\qquad\text{as}\qquad \eta\downarrow 0.
\]

Using \eqref{eq:smoothclip}, define the smoothed lower-envelope functions
\[
\underline f_{u,\kappa}(u;c):=S_{\kappa}(f_{u_i}(u;\theta_u);c),
\qquad
\underline f_{v\mid u,\kappa}(v\mid u;b):=S_{\kappa}(f_{v_i\mid u_i}(v\mid u;\theta_v);b),
\]
and the associated smoothed lower-bound numerator
\begin{equation}\label{eq:smoothedNTN}
\underline N_{TN,\kappa}(\varepsilon;c,b;\vartheta)
:=
\int_0^\infty
\exp(-u)\,
\underline f_{v\mid u,\kappa}(\varepsilon+u\mid u;b)\,
\underline f_{u,\kappa}(u;c)\,du.
\end{equation}

Recall that the upper-denominator term in the Normal--Truncated Normal case is
\[
\overline D_{TN}(\varepsilon;c,b;\vartheta)
=
\Delta_2(\varepsilon;\sigma_v,\sigma_u,\mu)
+
c\,\Delta_2(\varepsilon;\sigma_v)
+
b.
\]
Hence, for fixed \((c,\tau_0)\), define the smoothed frontier equation
\begin{equation}\label{eq:Gkappa}
G_{\kappa}(b;c,\tau_0,\vartheta)
:=
\underline N_{TN,\kappa}(\varepsilon;c,b;\vartheta)
-
\tau_0\,\overline D_{TN}(\varepsilon;c,b;\vartheta).
\end{equation}

To impose \(b\geq 0\) smoothly, compose \(G_{\kappa}\) with the map \(B_{\eta}\). Define
\begin{equation}\label{eq:Gkappaeta}
\widetilde G_{\kappa,\eta}(\beta;c,\tau_0,\vartheta)
:=
G_{\kappa}(B_{\eta}(\beta);c,\tau_0,\vartheta).
\end{equation}
For each \(c\), let \(\beta_{\kappa,\eta}(c,\tau_0)\) denote the unique solution in \(\mathbb R\) to
\begin{equation}\label{eq:implicitbeta}
\widetilde G_{\kappa,\eta}(\beta;c,\tau_0,\vartheta)=0,
\end{equation}
and define the corresponding smooth breakdown frontier by
\begin{equation}\label{eq:bkappaeta}
b_{\kappa,\eta}(c,\tau_0):=B_{\eta}\big(\beta_{\kappa,\eta}(c,\tau_0)\big).
\end{equation}
Its plug-in estimator is defined by
\begin{equation}\label{eq:implicitbetahat}
\widetilde G_{\kappa_n,\eta_n}\big(\widehat\beta_{n,\kappa_n,\eta_n}(c,\tau_0);c,\tau_0,\widehat{\vartheta}_n\big)=0,
\end{equation}
with
\begin{equation}\label{eq:bhatkappaeta}
\widehat b_{n,\kappa_n,\eta_n}(c,\tau_0)
=
B_{\eta_n}\big(\widehat\beta_{n,\kappa_n,\eta_n}(c,\tau_0)\big),
\end{equation}
where \(\kappa_n>0\) and \(\eta_n>0\) are deterministic smoothing sequences.

For fixed \((\kappa,\eta)\), \(\widetilde G_{\kappa,\eta}(\beta;c,\tau_0,\vartheta)\) is continuously differentiable in both \(\beta\) and \(\vartheta\). Therefore, provided
\[
\frac{\partial}{\partial \beta}
\widetilde G_{\kappa,\eta}\big(\beta_{\kappa,\eta}(c,\tau_0);c,\tau_0,\vartheta\big)\neq 0,
\]
the implicit-function theorem and the delta method imply
\begin{equation}\label{eq:asymptoticbeta}
\sqrt{n}\Big(\widehat\beta_{n,\kappa,\eta}(c,\tau_0)-\beta_{\kappa,\eta}(c,\tau_0)\Big)
\xrightarrow{d}
N\big(0,\sigma^2_{\beta_{\kappa,\eta}(c,\tau_0)}\big),
\end{equation}
where
\begin{equation}\label{eq:Dbeta}
\boldsymbol{D}_{\beta_{\kappa,\eta}(c,\tau_0)}
=
-
\left(
\frac{\partial}{\partial \beta}
\widetilde G_{\kappa,\eta}\big(\beta_{\kappa,\eta}(c,\tau_0);c,\tau_0,\vartheta\big)
\right)^{-1}
\frac{\partial}{\partial \vartheta'}
\widetilde G_{\kappa,\eta}\big(\beta_{\kappa,\eta}(c,\tau_0);c,\tau_0,\vartheta\big),
\end{equation}
and
\begin{equation}\label{eq:varbeta}
\sigma^2_{\beta_{\kappa,\eta}(c,\tau_0)}
=
\boldsymbol{D}_{\beta_{\kappa,\eta}(c,\tau_0)}
\mathcal{V}_{\vartheta}
\boldsymbol{D}_{\beta_{\kappa,\eta}(c,\tau_0)}'.
\end{equation}

Since \(b_{\kappa,\eta}(c,\tau_0)=B_{\eta}(\beta_{\kappa,\eta}(c,\tau_0))\), another application of the delta method yields
\begin{equation}\label{eq:asymptoticb}
\sqrt{n}\Big(\widehat b_{n,\kappa,\eta}(c,\tau_0)-b_{\kappa,\eta}(c,\tau_0)\Big)
\xrightarrow{d}
N\big(0,\sigma^2_{b_{\kappa,\eta}(c,\tau_0)}\big),
\end{equation}
with influence derivative
\begin{equation}\label{eq:Dbeta_to_Db}
\boldsymbol{D}_{b_{\kappa,\eta}(c,\tau_0)}
=
B_{\eta}'\big(\beta_{\kappa,\eta}(c,\tau_0)\big)\,
\boldsymbol{D}_{\beta_{\kappa,\eta}(c,\tau_0)},
\end{equation}
and asymptotic variance
\begin{equation}\label{eq:varbkappaeta}
\sigma^2_{b_{\kappa,\eta}(c,\tau_0)}
=
\boldsymbol{D}_{b_{\kappa,\eta}(c,\tau_0)}
\mathcal{V}_{\vartheta}
\boldsymbol{D}_{b_{\kappa,\eta}(c,\tau_0)}'.
\end{equation}

A feasible estimator of \(\sigma^2_{b_{\kappa,\eta}(c,\tau_0)}\) is obtained by replacing population quantities with their plug-in analogues:
\[
\widehat{\sigma}^2_{b_{\kappa,\eta}(c,\tau_0),n}
=
\widehat{\boldsymbol{D}}_{b_{\kappa,\eta}(c,\tau_0),n}
\widehat{\mathcal{V}}_{\vartheta,n}
\widehat{\boldsymbol{D}}_{b_{\kappa,\eta}(c,\tau_0),n}'.
\]

Because the frontier is now characterized through the smooth implicit equation \eqref{eq:implicitbeta}, and because under the maintained condition that, for each fixed \((c,\tau_0,\vartheta)\), the map
\[
\beta \mapsto \widetilde G_{\kappa,\eta}(\beta;c,\tau_0,\vartheta)
\]
is strictly monotone, the corresponding solution \(\beta_{\kappa,\eta}(c,\tau_0)\) is uniquely defined. Standard nonparametric bootstrap methods can therefore be applied directly to \(\widehat\beta_{n,\kappa,\eta}(c,\tau_0)\) for fixed \((\kappa,\eta)\), and the resulting bootstrap draws can then be mapped into the frontier scale through \eqref{eq:bhatkappaeta}. In particular, the bootstrap avoids the need to deal directly with the nonsmooth positivity restriction \(b\geq 0\).

\subsubsection{What happens when \((\kappa,\eta)\) shrink?}

The doubly smoothed frontier \(b_{\kappa,\eta}(c,\tau_0)\) approximates the original clipped-envelope frontier as \((\kappa,\eta)\downarrow (0,0)\). Let \(b(c,\tau_0)\) denote the solution to the unsmoothed frontier equation
\[
\underline N_{TN}(\varepsilon;c,b;\vartheta)
-
\tau_0\,\overline D_{TN}(\varepsilon;c,b;\vartheta)
=0.
\]
The next lemma states that, if both smoothing parameters vanish slowly enough and the target frontier is strictly positive, the plug-in estimator based on the doubly smoothed problem remains consistent for the original clipped-envelope frontier.

\begin{lemma}\label{LemmaSmooth}
Assume that, for each fixed \(c\), the clipped frontier equation admits a unique solution \(b(c,\tau_0)\in(0,\infty)\), and that \(G_{\kappa}(b;c,\tau_0,\vartheta)\) is strictly decreasing in \(b\) on a neighborhood of \(b(c,\tau_0)\), uniformly for sufficiently small \(\kappa\). If \(\kappa_n\downarrow 0\), \(\eta_n\downarrow 0\), \(\sqrt{n}\kappa_n\to\infty\), and \(\sqrt{n}\eta_n\to\infty\), then
\[
\widehat b_{n,\kappa_n,\eta_n}(c,\tau_0)\xrightarrow{p} b(c,\tau_0).
\]
\end{lemma}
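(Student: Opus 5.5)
The plan is to split the error into a deterministic smoothing bias and a stochastic estimation error:
\[
\widehat b_{n,\kappa_n,\eta_n}(c,\tau_0)-b(c,\tau_0)
=
\big[\widehat b_{n,\kappa_n,\eta_n}(c,\tau_0)-b_{\kappa_n,\eta_n}(c,\tau_0)\big]
+
\big[b_{\kappa_n,\eta_n}(c,\tau_0)-b(c,\tau_0)\big].
\]
I would show that the second bracket converges to zero deterministically and that the first is \(o_p(1)\). Throughout I write \(G_0(b):=\underline N_{TN}(\varepsilon;c,b;\vartheta)-\tau_0\overline D_{TN}(\varepsilon;c,b;\vartheta)\) for the clipped frontier function, whose unique zero is \(b^*:=b(c,\tau_0)>0\).

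\emph{Bias term.} First I would bound the smoothing error of the clipping operator. Writing \(S_\kappa(x;a)=\kappa\,\mathrm{sp}((x-a)/\kappa)-\kappa\,\mathrm{sp}(-a/\kappa)\) with \(\mathrm{sp}(t)=\log(1+e^t)\), and using \(0\le \mathrm{sp}(t)-t_+\le\log 2\), one gets \(|S_\kappa(x;a)-(x-a)_+|\le 2\kappa\log 2\) uniformly in \(x,a\ge0\). Since the benchmark normal and truncated-normal densities are bounded and integrable, and \(e^{-u}\) is integrable on \([0,\infty)\), expanding the product of smoothed envelopes gives
\[
\sup_{b\in K}|G_\kappa(b)-G_0(b)|\le C\kappa
\]
on any compact \(K\ni b^*\). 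Similarly \(0\le B_\eta(\beta)-\beta_+\le \eta\log 2\). Because \(b^*>0\), the zero of \(\widetilde G_{\kappa,\eta}\) corresponds to a zero of \(G_\kappa\) in the region where \(B_\eta\) is invertible. Hence \(b_{\kappa,\eta}\) is the zero of \(G_\kappa\), and \(\eta\) affects only the reparametrization and not the location of the root. I would then use the assumed uniform strict decrease of \(G_\kappa\) near \(b^*\). For small \(\delta\), \(G_0(b^*-\delta)>0>G_0(b^*+\delta)\) with some margin \(m(\delta)>0\), because \(G_0\) is continuous, decreasing near \(b^*\), and has a unique root there. Once \(C\kappa<m(\delta)\), the same sign pattern holds for \(G_\kappa\), so its root lies in \((b^*-\delta,b^*+\delta)\). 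This gives \(b_{\kappa_n,\eta_n}\to b^*\).

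\emph{Stochastic term.} By the ML asymptotics displayed above, \(\widehat\vartheta_n-\vartheta=O_p(n^{-1/2})\). I would show that \(\sup_{b\in K}|G_\kappa(b;\widehat\vartheta_n)-G_\kappa(b;\vartheta)|\to_p 0\). The subtlety is that derivatives of \(S_\kappa\) are bounded by 1 in \(x\), but the second derivatives scale like \(\kappa^{-1}\). A first-order mean-value bound in \(\vartheta\) uses only \(|\partial_x S_\kappa|\le1\) together with the smoothness of the benchmark densities in \((\mu,\sigma_u,\sigma_v)\), and gives an \(O_p(n^{-1/2})\) bound uniform in \(\kappa\). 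The rate conditions \(\sqrt n\kappa_n\to\infty\) and \(\sqrt n\eta_n\to\infty\) are needed for the root step. Its local slope, or its lower bound on margins, is only guaranteed up to terms of order \(\kappa_n\). Converting a uniform function error \(O_p(n^{-1/2})\) into a root error therefore requires \(n^{-1/2}/\kappa_n\to0\). Likewise, mapping back through \(B_{\eta_n}\), whose inverse has derivative of order \(1\) away from zero but of order \(\eta_n^{-1}\) near the kink, requires \(n^{-1/2}/\eta_n\to0\). The \(\beta\)-estimator's deviation is then \(o_p(1)\). Repeating the sign-change argument with \(G_{\kappa_n}(\cdot;\widehat\vartheta_n)\) shows that, with probability approaching one, its root lies within \(\delta\) of \(b_{\kappa_n,\eta_n}\). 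The result follows by combining this with the bias step and letting \(\delta\downarrow0\).

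The main obstacle is making the uniform-in-\(\kappa\) stochastic equicontinuity rigorous. This means bounding \(\partial_\vartheta G_\kappa\) uniformly over \(b\in K\) and small \(\kappa\), via dominated convergence applied to the Gaussian tails. It also means verifying that the decreasing margin of \(G_{\kappa_n}\) degrades no faster than the rates allow. I would try first to use \(|\partial_x S_\kappa|\le 1\) throughout, so that no \(\kappa^{-1}\) factors appear.
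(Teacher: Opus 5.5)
Your proof is correct and uses the same decomposition as the paper: a smoothing bias $b_{\kappa_n,\eta_n}-b$ plus an estimation error $\widehat b_{n,\kappa_n,\eta_n}-b_{\kappa_n,\eta_n}$. The root steps, however, are handled differently.

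\emph{The paper's route.} The paper gets the bias step from pointwise dominated convergence $G_\kappa\to G_0$ plus an unspecified continuity-of-roots argument. It gets the estimation step from a joint mean-value expansion of $\widetilde G_{\kappa_n,\eta_n}$ in $(\beta,\vartheta)$. This needs $\partial_\beta\widetilde G_{\kappa_n,\eta_n}$ bounded away from zero at random intermediate points, after which the $1$-Lipschitz property of $B_{\eta_n}$ transfers the result to $b$.

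\emph{Your route.} You work directly in $b$-space. Since $B_\eta$ is a strictly increasing bijection of $\mathbb R$ onto $(0,\infty)$, $b_{\kappa,\eta}$ and $\widehat b_{n,\kappa_n,\eta_n}$ are simply the positive zeros of $G_\kappa(\cdot;\vartheta)$ and $G_{\kappa_n}(\cdot;\widehat\vartheta_n)$. So $\eta$ drops out entirely, which is cleaner than the paper's Step 2. Bracketing at $b^*\pm\delta$ needs only sign information. It therefore avoids the somewhat circular requirement that the intermediate points stay where the derivative is controlled. Your bounds $0\le\partial_xS_\kappa\le 1$ and $0\le S_\kappa(x;a)\le x$ also make explicit a bound on $\partial_\vartheta G_\kappa$ that is uniform in $(\kappa,b)$, which the paper leaves implicit. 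This is a routine dominated-convergence estimate, not a real obstacle. What the paper's mean-value route buys, when its derivative condition holds, is the $O_p(n^{-1/2})$ rate for $\widehat\beta_n-\beta_{\kappa_n,\eta_n}$ that feeds its delta-method results.

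\emph{A correction on the rate conditions.} The conditions $\sqrt n\kappa_n\to\infty$ and $\sqrt n\eta_n\to\infty$ are not needed, and the reasons you give for them do not apply.
\begin{itemize}
\item Your own bounds give $\sup_{b\ge 0}|G_{\kappa_n}(b;\widehat\vartheta_n)-G_0(b;\vartheta)|\le C\kappa_n+O_p(n^{-1/2})=o_p(1)$.
\item The margin $m(\delta)$ at $b^*\pm\delta$ is defined from $G_0$, so it does not shrink with $\kappa_n$.
\item $B_{\eta_n}^{-1}$ is never evaluated near its kink, because the root sits near $b^*>0$; indeed $\eta$ plays no role at all, as you observed.
\end{itemize}
Bracketing therefore closes using only $\kappa_n\downarrow 0$ and $\widehat\vartheta_n\xrightarrow{p}\vartheta$. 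The paper's proof does not invoke the rate conditions either.

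\emph{A further simplification.} $S_\kappa(x;a)$ is nonincreasing in $a$ for $x\ge 0$, so $\underline N_{TN,\kappa}$ is nonincreasing in $b$. Hence $G_\kappa(b)-G_\kappa(b')\ge \tau_0(b'-b)$ for $b<b'$, for every $\kappa\ge 0$ and every parameter value. For $\tau_0>0$ the monotonicity hypothesis is thus automatic, with $m(\delta)\ge\tau_0\delta$. Your argument then yields the rate $\widehat b_{n,\kappa_n,\eta_n}(c,\tau_0)-b(c,\tau_0)=O_p(\kappa_n+n^{-1/2})$.
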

The proof of Lemma \ref{LemmaSmooth} can be found in Appendix \ref{Appproofs}

\section{Empirical Illustration}

In this section, we illustrate the procedure developed in the previous sections using the data analysed in \cite{nguyen2022efficiency}, which contain output and input information for rice producers in the Philippines.\footnote{The dataset is available in the \texttt{frontier} package. See \url{https://frontier.r-forge.r-project.org/} and \url{https://www.rdocumentation.org/packages/frontier/versions/1.1-8/topics/riceProdPhil}.} The dataset contains information on 43 rice producers in Tarlac, Philippines, from 1990 to 1997. From the original dataset, we use one output \(y_i\) (freshly threshed rice in tons) and three inputs \(x_i\): planted area (in hectares), labour used (man-days of family and hired labour), and fertilizer used (active ingredients in kilograms).

We estimate the benchmark stochastic frontier under three alternative one-sided distributions for inefficiency: Normal--Truncated Normal (TN), Normal--Exponential (EXP), and Normal--Half-Normal (HN). For each benchmark specification, we compute the implied benchmark technical-efficiency object
\[
TE_0(\varepsilon)=E[\exp(-u_i)\mid \varepsilon_i=\varepsilon]
\]
at several values of the composite error \(\varepsilon\), and then evaluate the corresponding smooth breakdown frontier  over a grid of values of \(c\). We report results for two types of conclusions: (i) the benchmark conclusion \(TE(\varepsilon)\geq TE_0(\varepsilon)\), and (ii) fixed absolute thresholds such as \(TE(\varepsilon)\geq 0.70\) and \(TE(\varepsilon)\geq 0.80\).

In what follows, we report the breakdown frontier in its original \((c,b)\) scale, since this is the object directly implied by the identification analysis. However, the numerical magnitudes of \(b\) and \(c\) should be interpreted with caution, because they are measured in units of density rather than in probabilities or percentages. For this reason, one may alternatively read the frontier through normalized parameters, obtained by scaling \(b\) and \(c\) by the peaks of the corresponding benchmark latent densities. This normalization does not replace the original frontier, but provides a complementary way of assessing how large a given relaxation is relative to the benchmark model.

The empirical results deliver three main findings. First, when the conclusion of interest is set equal to the benchmark technical-efficiency value, that is, \(TE(\varepsilon)\geq TE_0(\varepsilon)\), the estimated smooth frontier collapses to zero throughout the considered grid of values of \(c\). This occurs for all values of \(\varepsilon\) that we examine and for all three benchmark specifications. Hence, in this application, the benchmark conclusion is locally fragile to simultaneous relaxations of the assumptions on the inefficiency and noise distributions.

Second, although the benchmark frontier is degenerate, the benchmark technical-efficiency level itself varies substantially with \(\varepsilon\). Table~\ref{Table1} reports the benchmark values \(\tau_0=TE_0(\varepsilon)\). Under the TN and EXP benchmarks, these values are nearly identical throughout the support of \(\varepsilon\); under the HN benchmark they are somewhat smaller, especially in the lower tail of the composite-error distribution. For example, at the median residual quantile \(p50\), the benchmark values are approximately \(0.835\) under TN and EXP, and \(0.803\) under HN. At the lower quantile \(p10\), they are approximately \(0.570\) under TN and EXP, and \(0.538\) under HN.

Third, once we replace the benchmark conclusion by more moderate absolute thresholds, nontrivial frontiers emerge. In particular, for conclusions such as \(TE(\varepsilon)\geq 0.70\), the frontier is positive for a non-negligible range of values of \(c\), especially around \(\varepsilon=0\), \(p50\), \(p75\), and \(p90\). By contrast, for \(TE(\varepsilon)\geq 0.80\), the admissible region shrinks sharply, and for \(TE(\varepsilon)\geq 0.90\) it becomes nearly degenerate. This pattern is consistent with the interpretation of the breakdown frontier as a robustness summary: weaker conclusions admit larger robust regions, while stronger conclusions survive only under much smaller deviations from the benchmark model.

These results also clarify what is gained relative to standard stochastic frontier analysis. A conventional SFA exercise would stop at reporting the benchmark values \(TE_0(\varepsilon)\) under each maintained distributional specification. The breakdown-frontier analysis adds a second layer of information: it reveals whether those benchmark conclusions are robust or fragile to simultaneous misspecification in the latent inefficiency and noise distributions. In our application, this distinction is empirically important. Although the benchmark values \(TE_0(\varepsilon)\) vary substantially across \(\varepsilon\) and benchmark models, the associated benchmark conclusion \(TE(\varepsilon)\geq TE_0(\varepsilon)\) is essentially degenerate in robustness terms. By contrast, weaker conclusions such as \(TE(\varepsilon)\geq 0.70\) and, to a lesser extent, \(TE(\varepsilon)\geq 0.80\), do admit nontrivial robust regions. Thus, the gain from the breakdown-frontier approach is not a new benchmark estimate of efficiency, but a quantitative characterization of how robust that benchmark conclusion is.

Figure~\ref{fig3} compares the point-estimated smooth breakdown frontier across benchmark specifications at \(\varepsilon=p50\) under the benchmark conclusion \(TE(\varepsilon)\geq TE_0(\varepsilon)\). The three curves coincide at zero throughout the grid, indicating that, in this case, the benchmark conclusion is equally fragile under TN, EXP, and HN. This figure is useful because it shows that the fragility of the benchmark conclusion is not driven by one particular parametric specification.

Figure~\ref{fig4} keeps the benchmark specification fixed at TN and the composite-error value fixed at \(\varepsilon=p50\), and compares the frontier across alternative technical-efficiency conclusions. The benchmark and \(0.90\) curves are essentially degenerate, while the frontier is clearly positive for \(TE(\varepsilon)\geq 0.70\) and, to a lesser extent, for \(TE(\varepsilon)\geq 0.80\). This shows that the empirical content of the sensitivity analysis becomes more informative when one evaluates conclusions that are economically relevant but less demanding than the benchmark equality-based threshold.

To illustrate inference for the smooth breakdown frontier, Figure~\ref{fig5} reports pointwise cluster-bootstrap confidence intervals for a representative nondegenerate case: the TN benchmark, evaluated at \(\varepsilon=p50\), for the conclusion \(TE(\varepsilon)\geq 0.70\). We focus on this case because the benchmark frontier is degenerate in our application, whereas this weaker conclusion yields a nontrivial frontier. The bootstrap intervals are informative over the region where the frontier is positive and shrink toward zero as the admissible robust region vanishes. This confirms that bootstrap-based inference for the smooth frontier is feasible in empirically relevant nondegenerate settings.

Tables~\ref{Table2A} and \ref{Table2B} report point-estimated frontier values at \(c=0\) and \(c=0.10\), respectively. They reinforce the same message. When the inefficiency distribution is not relaxed (\(c=0\)), the benchmark conclusion remains degenerate, while the conclusions \(TE(\varepsilon)\geq 0.70\) and \(TE(\varepsilon)\geq 0.80\) still admit positive relaxations in the noise distribution. Once the inefficiency distribution is relaxed to \(c=0.10\), the admissible relaxation in the noise density shrinks further, especially for the stricter \(0.80\) threshold. This is exactly the trade-off summarized by the breakdown frontier.

Taken together, the empirical exercise delivers two substantive messages. On the one hand, benchmark technical-efficiency conclusions obtained from standard parametric stochastic frontier models can be highly fragile once one allows for simultaneous misspecification in the latent distributions. On the other hand, the breakdown-frontier framework remains informative because it reveals which weaker, economically interpretable conclusions about technical efficiency still admit nontrivial robust regions, and how those regions differ across benchmark specifications and values of the composite error. A frontier close to zero should therefore not be read as saying that the values of \(b\) or \(c\) are “small” in an absolute sense, but rather as indicating that the conclusion of interest is fragile: once the inefficiency distribution is even slightly relaxed, little or no additional relaxation in the noise distribution can be sustained while preserving the same conclusion.

\begin{figure}[H]
\centering
\includegraphics[width=0.70\textwidth,height=0.38\textheight,keepaspectratio]{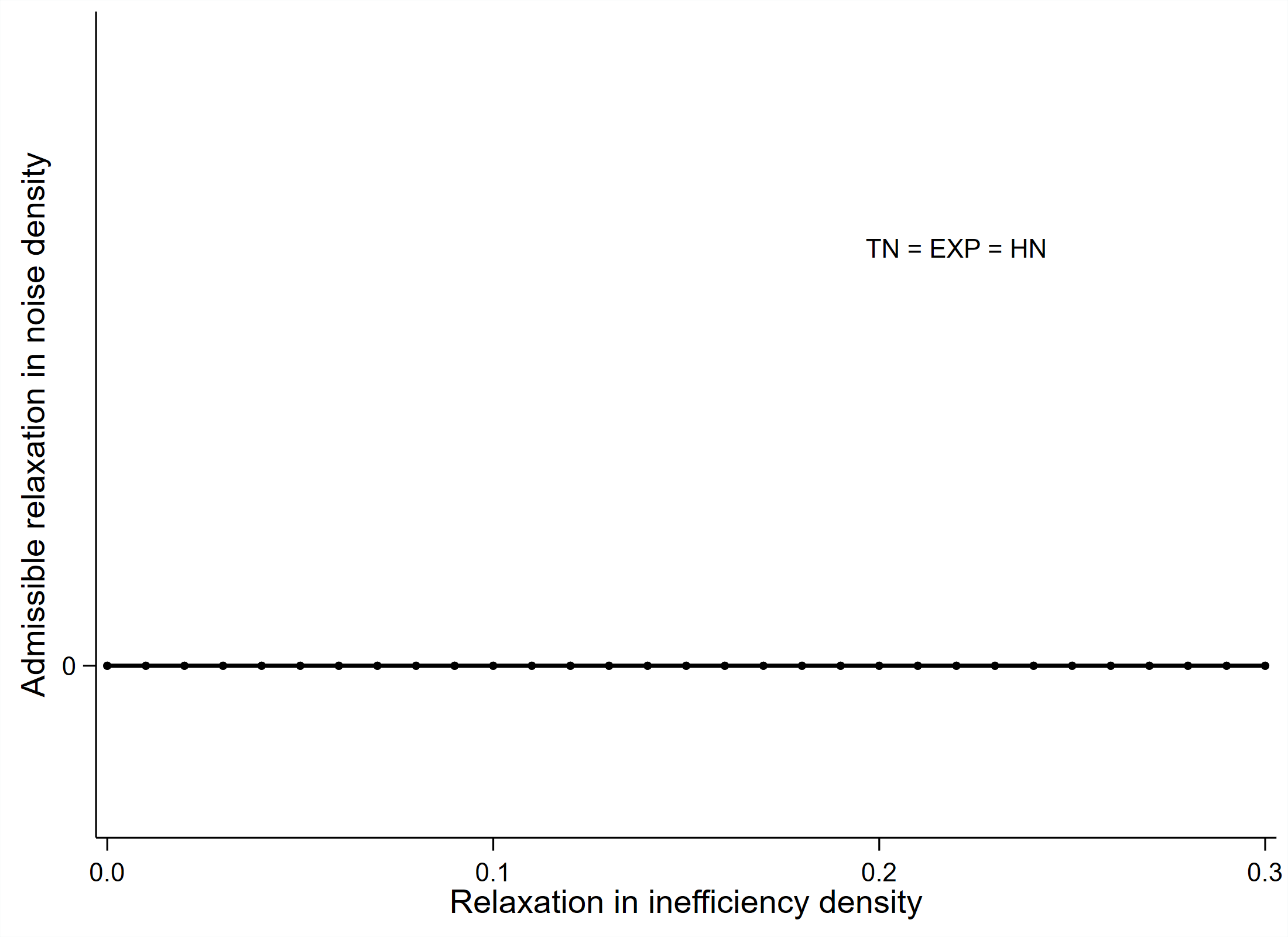}
\caption{Point-estimated smooth breakdown frontier \(b_{\kappa}(c,\tau_0)\) as a function of the relaxation parameter \(c\), evaluated at \(\varepsilon=p50\) under the benchmark conclusion \(TE(\varepsilon)\geq TE_0(\varepsilon)\). The vertical axis reports the maximum admissible relaxation in the noise density, \(b_{\kappa}(c,\tau_0)\), that remains compatible with the conclusion for each fixed value of \(c\). TN, EXP, and HN denote the Normal--Truncated Normal, Normal--Exponential, and Normal--Half-Normal benchmark models, respectively. In this application, all three frontiers coincide at zero, indicating that the benchmark conclusion is locally fragile across benchmark specifications.}
\label{fig3}
\end{figure}

\begin{figure}[H]
\centering
\includegraphics[width=0.76\textwidth,height=0.42\textheight,keepaspectratio]{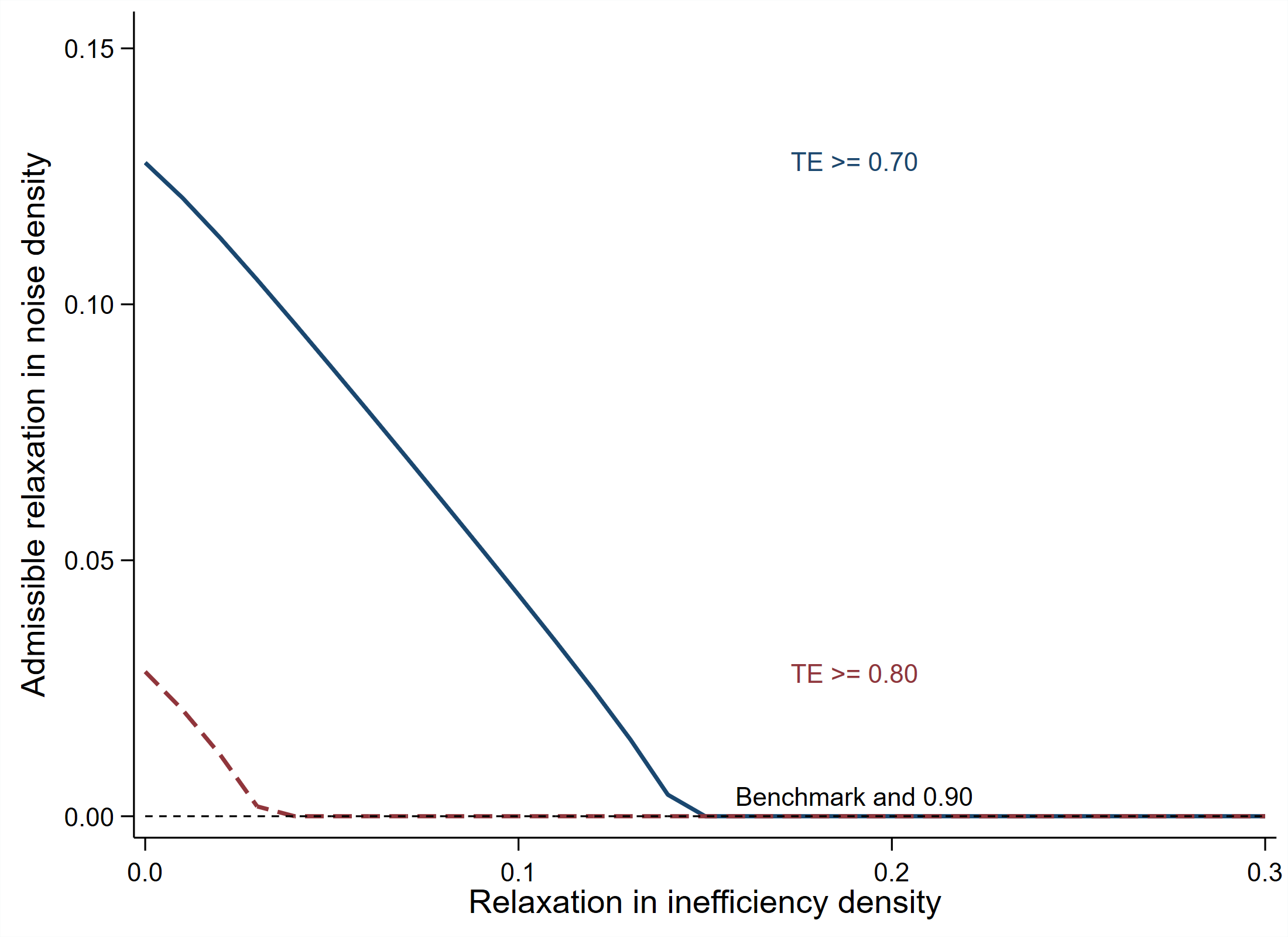}
\caption{Point-estimated smooth breakdown frontier \(b_{\kappa}(c,\tau_0)\) under the TN benchmark, evaluated at \(\varepsilon=p50\), for alternative technical-efficiency conclusions. The benchmark curve corresponds to \(TE(\varepsilon)\geq TE_0(\varepsilon)\), while the other curves correspond to absolute thresholds \(TE(\varepsilon)\geq 0.70\), \(0.80\), and \(0.90\). The figure shows that nontrivial robust regions emerge for weaker conclusions, whereas stricter and benchmark conclusions are nearly degenerate in this application.}
\label{fig4}
\end{figure}

\begin{figure}[H]
\centering
\includegraphics[width=0.76\textwidth,height=0.42\textheight,keepaspectratio]{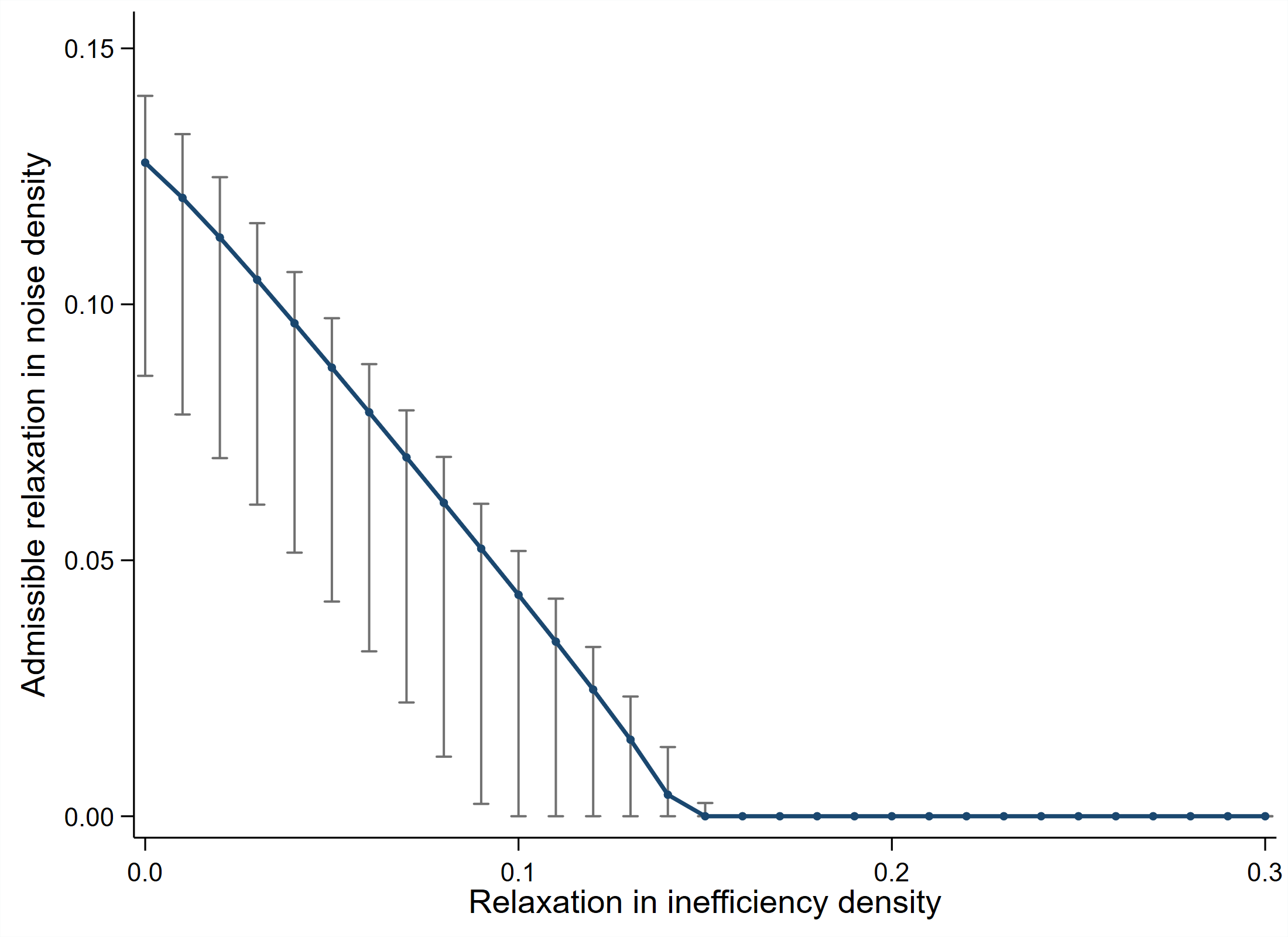}
\caption{Point estimate and pointwise cluster-bootstrap confidence interval for the smooth breakdown frontier \(b_{\kappa}(c,\tau_0)\) under the TN benchmark, evaluated at \(\varepsilon=p50\), for the conclusion \(TE(\varepsilon)\geq 0.70\). The solid line is the point estimate, and the shaded region reports the pointwise bootstrap interval based on 200 cluster-bootstrap replications at the producer level. The interval is informative over the range of values of \(c\) for which the frontier is positive, and collapses toward zero as the admissible robust region vanishes.}
\label{fig5}
\end{figure}
\begin{center}
\refstepcounter{table}\label{Table1}
\textbf{Table \thetable.} Benchmark technical-efficiency values \(\tau_0=TE_0(\varepsilon)\) by benchmark model and value of the composite error. TN and EXP produce nearly identical benchmark values throughout, while HN yields somewhat lower values, especially in the lower tail of \(\varepsilon\).\\[0.5em]
{\small
\begin{tabular}{lccc}
\toprule
 & TN & EXP & HN \\\\
\midrule
0 &  0.898 &  0.898 &  0.887 \\\\
p10 &  0.570 &  0.570 &  0.538 \\\\
p25 &  0.732 &  0.733 &  0.685 \\\\
p50 &  0.835 &  0.835 &  0.803 \\\\
p75 &  0.889 &  0.889 &  0.875 \\\\
p90 &  0.912 &  0.912 &  0.906 \\\\
\bottomrule
\end{tabular}
}

\end{center}

\begin{center}
\refstepcounter{table}\label{Table2A}
\textbf{Table \thetable.} Point-estimated values of the smooth breakdown frontier \(b_{\kappa}(c,\tau_0)\) at \(c=0\), by benchmark model, value of \(\varepsilon\), and conclusion of interest. The table reports the maximum admissible relaxation in the noise density when the inefficiency distribution is not relaxed. The benchmark conclusion yields degenerate frontiers, while the conclusions \(TE(\varepsilon)\geq 0.70\) and \(TE(\varepsilon)\geq 0.80\) generate positive but substantially smaller admissible regions.\\[0.5em]
{\small
\begin{tabular}{lccccccccc}
\toprule
 & \multicolumn{3}{c}{Benchmark} & \multicolumn{3}{c}{\(TE(\varepsilon)\geq 0.70\)} & \multicolumn{3}{c}{\(TE(\varepsilon)\geq 0.80\)} \\\\
\cmidrule(lr){2-4}\cmidrule(lr){5-7}\cmidrule(lr){8-10}
\(\varepsilon\) & TN & EXP & HN & TN & EXP & HN & TN & EXP & HN \\\\
\midrule
0 &  0.000 &  0.000 &  0.000 &  0.158 &  0.159 &  0.119 &  0.071 &  0.071 &  0.049 \\\\
p10 &  0.000 &  0.000 &  0.000 &  0.000 &  0.000 &  0.000 &  0.000 &  0.000 &  0.000 \\\\
p25 &  0.000 &  0.000 &  0.000 &  0.015 &  0.016 &  0.000 &  0.000 &  0.000 &  0.000 \\\\
p50 &  0.000 &  0.000 &  0.000 &  0.128 &  0.128 &  0.096 &  0.028 &  0.028 &  0.000 \\\\
p75 &  0.000 &  0.000 &  0.000 &  0.167 &  0.167 &  0.129 &  0.072 &  0.072 &  0.049 \\\\
p90 &  0.000 &  0.000 &  0.000 &  0.127 &  0.127 &  0.088 &  0.061 &  0.061 &  0.040 \\\\
\bottomrule
\end{tabular}
}

\end{center}

\begin{center}
\refstepcounter{table}\label{Table2B}
\textbf{Table \thetable.} Point-estimated values of the smooth breakdown frontier \(b_{\kappa}(c,\tau_0)\) at \(c=0.10\), by benchmark model, value of \(\varepsilon\), and conclusion of interest. The table reports the maximum admissible relaxation in the noise density when the inefficiency distribution is relaxed by \(c=0.10\). Relative to Table~\ref{Table2A}, the admissible values of \(b\) shrink further, especially for stricter technical-efficiency thresholds.\\[0.5em]
{\small
\begin{tabular}{lccccccccc}
\toprule
 & \multicolumn{3}{c}{Benchmark} & \multicolumn{3}{c}{\(TE(\varepsilon)\geq 0.70\)} & \multicolumn{3}{c}{\(TE(\varepsilon)\geq 0.80\)} \\\\
\cmidrule(lr){2-4}\cmidrule(lr){5-7}\cmidrule(lr){8-10}
\(\varepsilon\) & TN & EXP & HN & TN & EXP & HN & TN & EXP & HN \\\\
\midrule
0 &  0.000 &  0.000 &  0.000 &  0.107 &  0.107 &  0.059 &  0.018 &  0.018 &  0.000 \\\\
p10 &  0.000 &  0.000 &  0.000 &  0.000 &  0.000 &  0.000 &  0.000 &  0.000 &  0.000 \\\\
p25 &  0.000 &  0.000 &  0.000 &  0.000 &  0.000 &  0.000 &  0.000 &  0.000 &  0.000 \\\\
p50 &  0.000 &  0.000 &  0.000 &  0.043 &  0.043 &  0.000 &  0.000 &  0.000 &  0.000 \\\\
p75 &  0.000 &  0.000 &  0.000 &  0.108 &  0.108 &  0.060 &  0.010 &  0.010 &  0.000 \\\\
p90 &  0.000 &  0.000 &  0.000 &  0.091 &  0.091 &  0.048 &  0.024 &  0.024 &  0.000 \\\\
\bottomrule
\end{tabular}
}

\end{center}

\section{Further remarks}

\subsection{Allowing for heteroskedasticity}

Heteroskedasticity can be incorporated naturally into the framework, since the benchmark, or null, model can be estimated under alternative assumptions on the variances and means of the latent components. In particular, one may allow the distribution of \(u_i\) and/or \(v_i\) to depend on observable covariates through variance or location functions, and then carry out the same post-estimation sensitivity analysis on the resulting technical-efficiency object.

Under the revised clipped-envelope formulation, the main conceptual structure remains unchanged: the benchmark model is first estimated under the maintained parametric specification, and the sensitivity analysis is then performed post-estimation. The main difference is that the lower-envelope objects generally become
\[
\big(f_{u_i}(u;\theta_u)-c\big)_+,
\qquad
\big(f_{v_i\mid u_i}(v\mid u;\theta_v)-b\big)_+,
\]
or their smooth approximations, so that the relevant lower-bound integrals typically no longer admit simple closed forms. As a result, once heteroskedasticity is introduced, the frontier becomes a function of a richer parameter vector and is generally characterized through implicit equations and numerical integration rather than closed-form algebra.

\subsection{Relaxing exogeneity of the inputs}

The previous analysis focuses on distributional assumptions for the latent inefficiency and noise components. However, the breakdown-frontier framework can also be extended to incorporate relaxations of input exogeneity. One possible route would be to introduce sensitivity parameters that bound deviations from exogeneity through quantities such as
\[
|E(u_i\mid x_i)-E(u_i)|,
\qquad
|V(u_i\mid x_i)-V(u_i)|.
\]
Such bounds would capture departures from the benchmark assumption that the inefficiency component is conditionally unrelated to the observed inputs.

Extending the current framework in this direction would require re-deriving the corresponding bounds for the technical-efficiency object under these additional relaxations. In particular, the resulting envelopes would no longer be indexed solely by distributional distances for \(u_i\) and \(v_i\), but also by parameters governing deviations from exogeneity. Consequently, both the identified region and the associated breakdown frontier would have to be redefined in a higher-dimensional sensitivity space.

\subsection{Extension to panel-data models}

The empirical illustration focuses on the cross-sectional setting. Nevertheless, the procedure can be adapted to panel-data models in at least two directions. First, the cross-sectional analysis can be implemented year by year, thereby generating a sequence of breakdown frontiers over time. Second, in the presence of fixed effects, where the model becomes
\[
y_{i,t}=f(\theta_x,x_{i,t})+v_{i,t}-u_{i,t}+c_i,
\]
the framework can be implemented using first differences,
\[
\Delta y_{i,t,t+1}
=
\Delta f(\theta_x,x_{i,t,t+1})
+
\Delta v_{i,t,t+1}
-
\Delta u_{i,t,t+1},
\]
in the spirit of \cite{chen2014consistent,belotti2012consistent}.

Under the revised sensitivity-analysis logic, the benchmark panel-data frontier would again be estimated under the null model, while the post-estimation sensitivity analysis would be conducted through clipped or smooth-clipped envelope bounds for the relevant transformed latent components. Future research could further extend this methodology to stochastic-frontier panel-data models with persistent and transitory inefficiency components, as in \cite{colombi2014closed} and \cite{kumbhakar2014technical}.

\section{Conclusions}

This paper develops a sensitivity-analysis framework for stochastic frontier models. Starting from a benchmark stochastic frontier specification, we derive bounds for a conditional technical-efficiency object under controlled relaxations of the baseline distributional assumptions on the inefficiency and noise components. On the basis of these bounds, we characterize the corresponding robust region and breakdown frontier for conclusions about technical efficiency. We provide a complete analysis of estimation and inference of the breakdown frontier. We also show an empirical illustration using a well-known dataset and make the corresponding code available for practitioners.

A central message of the paper is that robustness in stochastic frontier analysis can be summarized through a transparent trade-off between relaxations of the assumptions imposed on the inefficiency component and relaxations of the assumptions imposed on the noise component. For a given benchmark conclusion about technical efficiency, the breakdown frontier identifies the combinations of deviations from the null model that are still compatible with that conclusion. In turn, the robust region provides a direct measure of how much misspecification the benchmark model can tolerate before the conclusion is no longer sustained by the data.

Because the sensitivity parameters are defined as distances between densities, their interpretation is inherently relative to the benchmark latent distributions. For this reason, normalized versions of \(b\) and \(c\) can be useful as a complement to the raw frontier when discussing empirical magnitudes.

The empirical illustration shows that the benchmark conclusion \(TE(\varepsilon)\geq TE_0(\varepsilon)\) can be highly fragile, with the corresponding frontier collapsing to zero in our application. At the same time, weaker and economically interpretable conclusions, such as \(TE(\varepsilon)\geq 0.70\) or \(TE(\varepsilon)\geq 0.80\), generate nontrivial downward-sloping frontiers. This illustrates how the breakdown-frontier approach distinguishes between the level of benchmark technical efficiency and the robustness of the conclusion drawn from it.

A key conceptual feature of the analysis is that the technological frontier is estimated under the benchmark, or null, model, while the sensitivity analysis is conducted post-estimation on the implied technical-efficiency object. Thus, the contribution of the paper is not to propose a different frontier estimator under every relaxation, but rather to formalize how conclusions drawn from the benchmark frontier vary once the maintained assumptions are progressively weakened.

More broadly, the results suggest that the gain from the breakdown-frontier approach, relative to standard stochastic frontier analysis, lies in its ability to quantify robustness rather than to replace the benchmark frontier estimator. Standard SFA provides a benchmark estimate of technical efficiency under maintained assumptions. The breakdown frontier complements that information by showing which efficiency conclusions remain valid once those assumptions are weakened, and by how much. This distinction between benchmark efficiency levels and robustness of efficiency conclusions is, in our view, the main added value of the approach.

We show how our analysis can be extended to typical specifications of the null model error structure other than the Normal-Truncated Normal distributions. These are the Normal-Exponential and the Normal-Half Normal cases, for which we not only derive the breakdown frontier and present the inference procedures, but also conduct the empirical illustration under these alternative distributional specifications, which in turn allows the comparison of the breakdown frontier behavior under different assumed distributions.

Finally, we propose how our analysis can be extended to contemplate some features relevant for practitioners. Firstly, the ability that the assumed variances allow for heteroskedasticity. Secondly, to incorporate relaxations of the input exogeneity by means of introducing additional sensitivity parameters that bound the deviations from the exogeneity assumptions, which will increase the dimensionality of the frontier space. Thirdly, we propose ideas for adapting the procedures to panel data applications, and consider some avenues of future research on other specifications of the stochastic frontier analysis such as the persistent and transitory inefficiency components. 

\renewcommand{\refname}{References}
\bibliographystyle{chicago}
\bibliography{references_all}
\appendix

\section{Proof of the Main results}\label{Appproofs}

\begin{proof}[Proof of Lemma \ref{LemmaIdentifiedset}]
Let
\[
\overline f_u(u;c):=f_{u_i}(u;\theta_u)+c,
\qquad
\underline f_u(u;c):=(f_{u_i}(u;\theta_u)-c)_+,
\]
and
\[
\overline f_{v\mid u}(v\mid u;b):=f_{v_i\mid u_i}(v\mid u;\theta_v)+b,
\qquad
\underline f_{v\mid u}(v\mid u;b):=(f_{v_i\mid u_i}(v\mid u;\theta_v)-b)_+.
\]
By Assumption \ref{AsPara}, together with nonnegativity of the true densities, we have
\begin{equation}\label{EqAs1}
\underline f_u(u;c)\leq f_{u_i}(u)\leq \overline f_u(u;c),
\qquad
\underline f_{v\mid u}(v\mid u;b)\leq f_{v_i\mid u_i}(v\mid u)\leq \overline f_{v\mid u}(v\mid u;b)
\quad \forall (v,u)\in\mathbb{R}\times[0,\infty).
\end{equation}

Note also that
\begin{align}
f_{\varepsilon_i}(\varepsilon)
&=
\frac{\partial P(\varepsilon_i\leq \varepsilon)}{\partial \varepsilon}
=
\int_{0}^{\infty} f_{v_i\mid u_i}(\varepsilon+u\mid u)f_{u_i}(u)\,du,
\nonumber\\
f_{\varepsilon_i\mid u_i}(\varepsilon\mid u)
&=
\frac{\partial P(\varepsilon_i\leq \varepsilon\mid u_i=u)}{\partial \varepsilon}
=
f_{v_i\mid u_i}(\varepsilon+u\mid u).
\label{EqBP1}
\end{align}

Applying \eqref{EqAs1} inside \eqref{EqBP1} yields
\begin{align}
\underline D(\varepsilon;c,b)
:=
\int_0^\infty
\underline f_{v\mid u}(\varepsilon+u\mid u;b)\,
\underline f_u(u;c)\,du
&\leq
f_{\varepsilon_i}(\varepsilon)
\leq
\int_0^\infty
\overline f_{v\mid u}(\varepsilon+u\mid u;b)\,
\overline f_u(u;c)\,du
=:
\overline D(\varepsilon;c,b),
\label{EqBP2}
\end{align}
and, for every \(u\geq 0\),
\begin{equation}\label{EqBP2b}
\underline f_{v\mid u}(\varepsilon+u\mid u;b)
\leq
f_{\varepsilon_i\mid u_i}(\varepsilon\mid u)
\leq
\overline f_{v\mid u}(\varepsilon+u\mid u;b).
\end{equation}

By Bayes' rule,
\begin{equation}\label{EqBP3}
f_{u_i\mid \varepsilon_i}(u\mid \varepsilon)
=
\frac{f_{\varepsilon_i\mid u_i}(\varepsilon\mid u)f_{u_i}(u)}{f_{\varepsilon_i}(\varepsilon)}.
\end{equation}

Combining \eqref{EqAs1}, \eqref{EqBP2}, \eqref{EqBP2b}, and \eqref{EqBP3}, and using \(\underline D(\varepsilon;c,b)>0\), we obtain
\begin{equation}\label{EqBP4upper}
f_{u_i\mid \varepsilon_i}(u\mid \varepsilon)
\leq
\frac{
\overline f_{v\mid u}(\varepsilon+u\mid u;b)\,
\overline f_u(u;c)
}{
\underline D(\varepsilon;c,b)
},
\end{equation}
and
\begin{equation}\label{EqBP4lower}
f_{u_i\mid \varepsilon_i}(u\mid \varepsilon)
\geq
\frac{
\underline f_{v\mid u}(\varepsilon+u\mid u;b)\,
\underline f_u(u;c)
}{
\overline D(\varepsilon;c,b)
}.
\end{equation}

Therefore,
\begin{align}
E[\exp(-u_i)\mid \varepsilon_i=\varepsilon]
&=
\int_0^\infty \exp(-u)\,f_{u_i\mid \varepsilon_i}(u\mid \varepsilon)\,du
\nonumber\\
&\leq
\frac{
\int_0^\infty \exp(-u)\,
\overline f_{v\mid u}(\varepsilon+u\mid u;b)\,
\overline f_u(u;c)\,du
}{
\underline D(\varepsilon;c,b)
}
=
\frac{\overline N(\varepsilon;c,b)}{\underline D(\varepsilon;c,b)},
\nonumber\\
E[\exp(-u_i)\mid \varepsilon_i=\varepsilon]
&\geq
\frac{
\int_0^\infty \exp(-u)\,
\underline f_{v\mid u}(\varepsilon+u\mid u;b)\,
\underline f_u(u;c)\,du
}{
\overline D(\varepsilon;c,b)
}
=
\frac{\underline N(\varepsilon;c,b)}{\overline D(\varepsilon;c,b)}.
\label{EqBP5proof}
\end{align}
This proves the result.

\begin{proof}[Proof of Lemma \ref{LemmaSmooth}]
 Let
\[
G_{0}(b;c,\tau_0,\vartheta)
:=
\underline N_{TN}(\varepsilon;c,b;\vartheta)
-
\tau_0\,\overline D_{TN}(\varepsilon;c,b;\vartheta),
\]
and let \(b(c,\tau_0)\in(0,\infty)\) denote its unique zero. For each \((\kappa,\eta)\), define
\[
\widetilde G_{\kappa,\eta}(\beta;c,\tau_0,\vartheta)
:=
G_{\kappa}(B_\eta(\beta);c,\tau_0,\vartheta),
\]
and let \(\beta_{\kappa,\eta}(c,\tau_0)\) be its unique zero, with
\[
b_{\kappa,\eta}(c,\tau_0)
=
B_\eta(\beta_{\kappa,\eta}(c,\tau_0)).
\]
The proof proceeds in three steps.

\medskip
\noindent
\textbf{Step 1: \(G_{\kappa}(b;c,\tau_0,\vartheta)\to G_0(b;c,\tau_0,\vartheta)\) pointwise in \(b\).}

By construction,
\[
S_\kappa(x;a)\to (x-a)_+
\qquad\text{as}\qquad \kappa\downarrow 0,
\]
for every \(x\ge 0\), \(a\ge 0\). Moreover,
\[
0\le S_\kappa(x;a)\le x
\qquad\text{for all }x\ge 0,\ a\ge 0.
\]
Hence, for each fixed \((c,b)\), the integrand in
\[
\underline N_{TN,\kappa}(\varepsilon;c,b;\vartheta)
=
\int_0^\infty
\exp(-u)\,
S_\kappa(f_{v_i\mid u_i}(\varepsilon+u\mid u;\theta_v);b)\,
S_\kappa(f_{u_i}(u;\theta_u);c)\,du
\]
converges pointwise to the integrand in \(\underline N_{TN}(\varepsilon;c,b;\vartheta)\), and is dominated by
\[
\exp(-u)\,f_{v_i\mid u_i}(\varepsilon+u\mid u;\theta_v)\,f_{u_i}(u;\theta_u),
\]
which is integrable under the null model. Therefore, by dominated convergence,
\[
\underline N_{TN,\kappa}(\varepsilon;c,b;\vartheta)
\to
\underline N_{TN}(\varepsilon;c,b;\vartheta)
\qquad\text{as}\qquad \kappa\downarrow 0.
\]
Since \(\overline D_{TN}(\varepsilon;c,b;\vartheta)\) does not depend on \(\kappa\), it follows that
\[
G_{\kappa}(b;c,\tau_0,\vartheta)\to G_0(b;c,\tau_0,\vartheta)
\]
pointwise in \(b\).

\medskip
\noindent
\textbf{Step 2: \(b_{\kappa,\eta}(c,\tau_0)\to b(c,\tau_0)\).}

Because
\[
B_\eta(\beta)\to \beta_+
\qquad\text{as}\qquad \eta\downarrow 0,
\]
and \(b(c,\tau_0)>0\) by assumption, there exists
\[
\beta(c,\tau_0)=b(c,\tau_0)
\]
such that \(B_\eta(\beta(c,\tau_0))\to b(c,\tau_0)\). Combining this with Step 1 yields
\[
\widetilde G_{\kappa,\eta}(\beta;c,\tau_0,\vartheta)
=
G_{\kappa}(B_\eta(\beta);c,\tau_0,\vartheta)
\to
G_0(\beta_+;c,\tau_0,\vartheta)
\]
for each fixed \(\beta\).

Since \(b(c,\tau_0)\in(0,\infty)\) is the unique zero of \(G_0(b;c,\tau_0,\vartheta)\), and \(G_\kappa(b;c,\tau_0,\vartheta)\) is strictly decreasing in \(b\) on a neighborhood of \(b(c,\tau_0)\), uniformly for sufficiently small \(\kappa\), standard continuity-of-roots arguments imply
\[
b_{\kappa,\eta}(c,\tau_0)\to b(c,\tau_0)
\qquad\text{as}\qquad (\kappa,\eta)\downarrow(0,0).
\]

\medskip
\noindent
\textbf{Step 3: \(\widehat b_{n,\kappa_n,\eta_n}(c,\tau_0)-b_{\kappa_n,\eta_n}(c,\tau_0)\xrightarrow{p}0\).}

For each \(n\),
\[
\widetilde G_{\kappa_n,\eta_n}(\widehat\beta_{n,\kappa_n,\eta_n}(c,\tau_0);c,\tau_0,\widehat\vartheta_n)=0,
\]
while
\[
\widetilde G_{\kappa_n,\eta_n}(\beta_{\kappa_n,\eta_n}(c,\tau_0);c,\tau_0,\vartheta)=0.
\]
Because \((\kappa_n,\eta_n)\) are fixed at each \(n\), the map
\[
(\beta,\vartheta)\mapsto \widetilde G_{\kappa_n,\eta_n}(\beta;c,\tau_0,\vartheta)
\]
is continuously differentiable. Since \(\widehat\vartheta_n\xrightarrow{p}\vartheta\), a first-order expansion around \((\beta_{\kappa_n,\eta_n}(c,\tau_0),\vartheta)\) yields
\begin{align*}
0
&=
\widetilde G_{\kappa_n,\eta_n}(\widehat\beta_{n,\kappa_n,\eta_n};c,\tau_0,\widehat\vartheta_n)
-
\widetilde G_{\kappa_n,\eta_n}(\beta_{\kappa_n,\eta_n};c,\tau_0,\vartheta)
\\
&=
\frac{\partial}{\partial \beta}
\widetilde G_{\kappa_n,\eta_n}(\widetilde\beta_n;c,\tau_0,\widetilde\vartheta_n)
\big(
\widehat\beta_{n,\kappa_n,\eta_n}-\beta_{\kappa_n,\eta_n}
\big)
+
\frac{\partial}{\partial \vartheta'}
\widetilde G_{\kappa_n,\eta_n}(\widetilde\beta_n;c,\tau_0,\widetilde\vartheta_n)
(\widehat\vartheta_n-\vartheta),
\end{align*}
for some intermediate values \(\widetilde\beta_n\) and \(\widetilde\vartheta_n\).

By the maintained derivative condition,
\[
\frac{\partial}{\partial \beta}
\widetilde G_{\kappa_n,\eta_n}(\widetilde\beta_n;c,\tau_0,\widetilde\vartheta_n)
\]
is bounded away from zero in probability for sufficiently large \(n\). Hence
\[
\widehat\beta_{n,\kappa_n,\eta_n}(c,\tau_0)-\beta_{\kappa_n,\eta_n}(c,\tau_0)=o_p(1).
\]
Since \(B_{\eta_n}\) is Lipschitz with derivative bounded by one,
\[
\big|
\widehat b_{n,\kappa_n,\eta_n}(c,\tau_0)-b_{\kappa_n,\eta_n}(c,\tau_0)
\big|
\le
\big|
\widehat\beta_{n,\kappa_n,\eta_n}(c,\tau_0)-\beta_{\kappa_n,\eta_n}(c,\tau_0)
\big|,
\]
and therefore
\[
\widehat b_{n,\kappa_n,\eta_n}(c,\tau_0)-b_{\kappa_n,\eta_n}(c,\tau_0)=o_p(1).
\]

Finally,
\[
\widehat b_{n,\kappa_n,\eta_n}(c,\tau_0)-b(c,\tau_0)
=
\big(
\widehat b_{n,\kappa_n,\eta_n}(c,\tau_0)-b_{\kappa_n,\eta_n}(c,\tau_0)
\big)
+
\big(
b_{\kappa_n,\eta_n}(c,\tau_0)-b(c,\tau_0)
\big).
\]
The first term is \(o_p(1)\) by Step 3, and the second converges to zero by Step 2 because \(\kappa_n\downarrow 0\) and \(\eta_n\downarrow 0\). Therefore,
\[
\widehat b_{n,\kappa_n,\eta_n}(c,\tau_0)\xrightarrow{p}b(c,\tau_0).
\]
This proves the result.
\end{proof}

\end{proof}
\section{Auxiliary results}\label{Appextra}
\footnotesize
\subsection{Normal--truncated normal integrals}

\begin{lemma}\label{LemmaAuxiliar}
Let
\begin{align*}
f_{u_i}(u;\theta_u)
&=
\frac{1}{\sqrt{2\pi\sigma_u^2}}
\frac{\exp\!\left(-\frac{(u-\mu)^2}{2\sigma_u^2}\right)}
{1-\Phi\!\left(-\frac{\mu}{\sigma_u}\right)},
\\
f_{v_i\mid u_i}(v\mid u;\theta_v)
&=
\frac{1}{\sqrt{2\pi\sigma_v^2}}
\exp\!\left(-\frac{v^2}{2\sigma_v^2}\right).
\end{align*}
Then
\begin{align*}
\Delta_1(\varepsilon;\sigma_v)
&\equiv
\int_{0}^{\infty}\exp(-u)\,f_{v_i\mid u_i}(\varepsilon+u\mid u;\theta_v)\,du
\\
&=
\exp\!\left(
-\frac{\varepsilon^2}{2\sigma_v^2}
+
\frac{(-\varepsilon-\sigma_v^2)^2}{2\sigma_v^2}
\right)
\left[
1-\Phi\!\left(
\sqrt{2}\,
\frac{-(-\varepsilon-\sigma_v^2)}{\sqrt{2\sigma_v^2}}
\right)
\right],
\\[0.4em]
\Delta_1(\varepsilon;\sigma_v,\sigma_u,\mu)
&\equiv
\int_{0}^{\infty}\exp(-u)\,f_{v_i\mid u_i}(\varepsilon+u\mid u;\theta_v)f_{u_i}(u;\theta_u)\,du
\\
&=
\sqrt{\frac{2\sigma_v^2\sigma_u^2}{\sigma_u^2+\sigma_v^2}}
\,
\frac{
\exp\!\left(
-\frac{\varepsilon^2}{2\sigma_v^2}
-\frac{\mu^2}{2\sigma_u^2}
+
\frac{\sigma_u^2+\sigma_v^2}{2\sigma_v^2\sigma_u^2}
\left[
\frac{\mu \sigma_v^2-\varepsilon \sigma_u^2-\sigma_v^2\sigma_u^2}{\sigma_u^2+\sigma_v^2}
\right]^2
\right)
}{
\sqrt{2}\,\sigma_v\sigma_u\sqrt{2\pi}\left(1-\Phi\!\left(-\frac{\mu}{\sigma_u}\right)\right)
}
\\
&\qquad\times
\left[
1-\Phi\!\left(
\sqrt{2}
\left(
\frac{
-\frac{\mu \sigma_v^2-\varepsilon \sigma_u^2-\sigma_v^2\sigma_u^2}{\sigma_u^2+\sigma_v^2}
}{
\sqrt{\frac{2\sigma_v^2\sigma_u^2}{\sigma_u^2+\sigma_v^2}}
}
\right)
\right)
\right],
\\[0.4em]
\Delta_1(\sigma_u,\mu)
&\equiv
\int_{0}^{\infty}\exp(-u)\,f_{u_i}(u;\theta_u)\,du
\\
&=
\frac{
\exp\!\left(
-\frac{\mu^2}{2\sigma_u^2}
+
\frac{(\mu-\sigma_u^2)^2}{2\sigma_u^2}
\right)
}{
1-\Phi\!\left(-\frac{\mu}{\sigma_u}\right)
}
\left[
1-\Phi\!\left(
\sqrt{2}
\left(
\frac{-(\mu-\sigma_u^2)}{\sqrt{2\sigma_u^2}}
\right)
\right)
\right],
\\[0.4em]
\Delta_2(\varepsilon;\sigma_v)
&\equiv
\int_{0}^{\infty}f_{v_i\mid u_i}(\varepsilon+u\mid u;\theta_v)\,du
=
\left[
1-\Phi\!\left(
\sqrt{2}
\left(
\frac{-(-\varepsilon)}{\sqrt{2\sigma_v^2}}
\right)
\right)
\right],
\\[0.4em]
\Delta_2(\varepsilon;\sigma_v,\sigma_u,\mu)
&\equiv
\int_{0}^{\infty}f_{v_i\mid u_i}(\varepsilon+u\mid u;\theta_v)f_{u_i}(u;\theta_u)\,du
\\
&=
\sqrt{\frac{2\sigma_v^2\sigma_u^2}{\sigma_u^2+\sigma_v^2}}
\,
\frac{
\exp\!\left(
-\frac{\varepsilon^2}{2\sigma_v^2}
-\frac{\mu^2}{2\sigma_u^2}
+
\frac{\sigma_u^2+\sigma_v^2}{2\sigma_v^2\sigma_u^2}
\left[
\frac{\mu \sigma_v^2-\varepsilon \sigma_u^2}{\sigma_u^2+\sigma_v^2}
\right]^2
\right)
}{
\sqrt{2}\,\sigma_v\sigma_u\sqrt{2\pi}\left(1-\Phi\!\left(-\frac{\mu}{\sigma_u}\right)\right)
}
\\
&\qquad\times
\left[
1-\Phi\!\left(
\sqrt{2}
\left(
\frac{
-\frac{\mu \sigma_v^2-\varepsilon \sigma_u^2}{\sigma_u^2+\sigma_v^2}
}{
\sqrt{\frac{2\sigma_v^2\sigma_u^2}{\sigma_u^2+\sigma_v^2}}
}
\right)
\right)
\right].
\end{align*}
Also,
\[
\int_{0}^{\infty}\exp(-u)\,du=1.
\]
Here \(\Phi(\cdot)\) denotes the cdf of a standard normal random variable.
\end{lemma}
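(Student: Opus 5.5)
The plan is to reduce every integral in Lemma \ref{LemmaAuxiliar} to the single Gaussian identity
\[
\int_0^\infty \exp\!\left(-\frac{(u-m)^2}{2s^2}\right)du=\sqrt{2\pi s^2}\,\Big[1-\Phi\!\Big(-\frac{m}{s}\Big)\Big],
\]
after completing the square in the exponent. The first step is to collect, for each integrand, the exponent as a quadratic in \(u\), write it as \(-\frac{(u-m)^2}{2s^2}+K\) with explicit \(m,s,K\), pull \(\exp(K)\) outside, and apply the identity. The arguments \(\sqrt{2}\,(-m)/\sqrt{2s^2}\) appearing in the statement are just \(-m/s\), so the displayed forms follow by rewriting.

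For the one-dimensional objects this is immediate. For \(\Delta_2(\varepsilon;\sigma_v)\), substituting \(v=\varepsilon+u\) gives \(P(v_i\ge \varepsilon)=1-\Phi(\varepsilon/\sigma_v)\). For \(\Delta_1(\varepsilon;\sigma_v)\), the exponent \(-u-(\varepsilon+u)^2/(2\sigma_v^2)\) completes to \(m=-\varepsilon-\sigma_v^2\), \(s=\sigma_v\), and \(K=-\varepsilon^2/(2\sigma_v^2)+m^2/(2\sigma_v^2)\). The factor \(\sqrt{2\pi\sigma_v^2}\) cancels with the normal normalising constant. For \(\Delta_1(\sigma_u,\mu)\), the exponent \(-u-(u-\mu)^2/(2\sigma_u^2)\) gives \(m=\mu-\sigma_u^2\), \(s=\sigma_u\), and the truncation constant \(1-\Phi(-\mu/\sigma_u)\) remains in the denominator. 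Finally, \(\int_0^\infty e^{-u}du=1\) is trivial.

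The two-dimensional objects \(\Delta_2(\varepsilon;\sigma_v,\sigma_u,\mu)\) and \(\Delta_1(\varepsilon;\sigma_v,\sigma_u,\mu)\) require the only real bookkeeping. I will combine the exponents
\[
-\frac{(\varepsilon+u)^2}{2\sigma_v^2}-\frac{(u-\mu)^2}{2\sigma_u^2}\;(-u).
\]
The coefficient of \(u^2\) is \(-\frac{\sigma_u^2+\sigma_v^2}{2\sigma_v^2\sigma_u^2}\), so \(s^2=\sigma_v^2\sigma_u^2/(\sigma_u^2+\sigma_v^2)\). The linear coefficient yields \(m=(\mu\sigma_v^2-\varepsilon\sigma_u^2)/(\sigma_u^2+\sigma_v^2)\), or \(m=(\mu\sigma_v^2-\varepsilon\sigma_u^2-\sigma_v^2\sigma_u^2)/(\sigma_u^2+\sigma_v^2)\) when the \(-u\) term is included. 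The remainder is \(K=-\varepsilon^2/(2\sigma_v^2)-\mu^2/(2\sigma_u^2)+m^2/(2s^2)\). Multiplying \(\sqrt{2\pi s^2}\) by the prefactor \(1/(2\pi\sigma_v\sigma_u(1-\Phi(-\mu/\sigma_u)))\) then reproduces the stated constant \(\sqrt{2s^2}/(\sqrt2\sigma_v\sigma_u\sqrt{2\pi}(\cdots))\).

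The main obstacle is purely algebraic: verifying the linear coefficient and the constant \(K\) after completing the square in the \(\Delta_1(\varepsilon;\sigma_v,\sigma_u,\mu)\) case, where three terms interact. I would check this by expanding \(m^2(\sigma_u^2+\sigma_v^2)/(2\sigma_v^2\sigma_u^2)\) directly against the statement. I would also sanity-check that setting the \(-u\) term to zero recovers the \(\Delta_2\) formula.
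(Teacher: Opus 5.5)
Your proposal is correct and follows essentially the same route as the paper: complete the square in the exponent and evaluate the half-line Gaussian integral as $\sqrt{2\pi s^2}\,[1-\Phi(-m/s)]$. The only difference is organisational: the paper first computes one master integral, with powers $(a_1,a_2,a_3)$ on the three factors, and then specializes it. You carry out the same square-completion case by case, and your values of $m$, $s$, $K$ and the prefactor all check out against the stated formulas.
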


\begin{proof}[Proof of Lemma \ref{LemmaAuxiliar}]
Let \(a_1,a_2,a_3\) be generic nonnegative real numbers, and suppose that \(a_2+a_3\geq 1\). Consider
\begin{align*}
I
&=
\int_{0}^{\infty}
\exp(-a_1 u)
\frac{1}{(\sqrt{2\pi\sigma_v^2})^{a_2}}
\exp\!\left(-\frac{a_2(\varepsilon+u)^2}{2\sigma_v^2}\right)
\frac{1}{(\sqrt{2\pi\sigma_u^2})^{a_3}}
\frac{\exp\!\left(-\frac{a_3(u-\mu)^2}{2\sigma_u^2}\right)}
{\left(1-\Phi\!\left(-\frac{\mu}{\sigma_u}\right)\right)^{a_3}}
\,du.
\end{align*}
This can be rewritten as
\begin{align*}
I
&=
\frac{1}{\sigma_v^{a_2}\sigma_u^{a_3}(\sqrt{2\pi})^{a_2+a_3}\left(1-\Phi\!\left(-\frac{\mu}{\sigma_u}\right)\right)^{a_3}}
\\
&\qquad\times
\int_{0}^{\infty}
\exp\!\left(
-a_1u
-\frac{a_2(\varepsilon+u)^2}{2\sigma_v^2}
-\frac{a_3(u-\mu)^2}{2\sigma_u^2}
\right)\,du.
\end{align*}
Expanding the exponent yields
\begin{align*}
I
&=
\frac{
\exp\!\left(
-\frac{a_2}{2\sigma_v^2}\varepsilon^2
-\frac{a_3}{2\sigma_u^2}\mu^2
\right)
}{
\sigma_v^{a_2}\sigma_u^{a_3}(\sqrt{2\pi})^{a_2+a_3}\left(1-\Phi\!\left(-\frac{\mu}{\sigma_u}\right)\right)^{a_3}
}
\\
&\qquad\times
\int_{0}^{\infty}
\exp\!\left(
-u^2\Big[\frac{\sigma_u^2 a_2+\sigma_v^2 a_3}{2\sigma_v^2\sigma_u^2}\Big]
+
u\Big[\frac{2a_3\mu \sigma_v^2-2a_2\varepsilon\sigma_u^2-2a_1\sigma_v^2\sigma_u^2}{2\sigma_v^2\sigma_u^2}\Big]
\right)\,du.
\end{align*}
Completing the square gives
\begin{align*}
I
&=
\sqrt{\frac{2\sigma_v^2\sigma_u^2}{\sigma_u^2 a_2+\sigma_v^2 a_3}}
\,
\frac{
\exp\!\left(
-\frac{a_2}{2\sigma_v^2}\varepsilon^2
-\frac{a_3}{2\sigma_u^2}\mu^2
+
\frac{\sigma_u^2 a_2+\sigma_v^2 a_3}{2\sigma_v^2\sigma_u^2}
\left[
\frac{a_3\mu \sigma_v^2-a_2\varepsilon \sigma_u^2-a_1\sigma_v^2\sigma_u^2}{\sigma_u^2 a_2+\sigma_v^2 a_3}
\right]^2
\right)
}{
\sqrt{2}\,\sigma_v^{a_2}\sigma_u^{a_3}(\sqrt{2\pi})^{a_2+a_3-1}\left(1-\Phi\!\left(-\frac{\mu}{\sigma_u}\right)\right)^{a_3}
}
\\
&\qquad\times
\left[
1-\Phi\!\left(
\sqrt{2}
\left(
\frac{
-\frac{a_3\mu \sigma_v^2-a_2\varepsilon \sigma_u^2-a_1\sigma_v^2\sigma_u^2}{\sigma_u^2 a_2+\sigma_v^2 a_3}
}{
\sqrt{\frac{2\sigma_v^2\sigma_u^2}{\sigma_u^2 a_2+\sigma_v^2 a_3}}
}
\right)
\right)
\right].
\end{align*}
Applying this expression at the relevant values of \((a_1,a_2,a_3)\) yields the formulas in the statement.
\end{proof}

\subsection{Alternative null models for \(u_i\): Exponential and Half-Normal}\label{AppAltNulls}

This subsection mirrors the revised analysis in Section \ref{sec2:4} by 
specializing the general bounds in Lemma \ref{LemmaIdentifiedset} to two 
alternative null models for the inefficiency term $u_i$ while maintaining 
$v_i \sim N(0, \sigma_v^2)$ and $u_i$ independent of $v_i$. Throughout, 
let $\varepsilon_i = v_i - u_i$, and $\Phi(\cdot)$ denotes the CDF of 
a standard normal random variable.

\subsubsection{Normal--Exponential null: \(u_i\sim Exp(\lambda)\)}

Assume \(u_i\sim Exp(\lambda)\) on \([0,\infty)\) with \(\lambda>0\), and \(v_i\sim N(0,\sigma_v^2)\) independent. Let
\[
\phi_{\sigma_v}(x)=\frac{1}{\sigma_v\sqrt{2\pi}}\exp\!\left(-\frac{x^2}{2\sigma_v^2}\right).
\]
Then the benchmark integral objects are
\begin{align}
\Delta_2(\varepsilon;\sigma_v,\lambda)
&=
\int_{0}^{\infty}\phi_{\sigma_v}(\varepsilon+u)\lambda e^{-\lambda u}\,du
=
\lambda\exp\!\left(\lambda \varepsilon+\frac{1}{2}\lambda^2\sigma_v^2\right)
\Phi\!\left(-\frac{\varepsilon}{\sigma_v}-\lambda\sigma_v\right),
\nonumber\\
\Delta_1(\varepsilon;\sigma_v,\lambda)
&=
\int_{0}^{\infty}e^{-u}\phi_{\sigma_v}(\varepsilon+u)\lambda e^{-\lambda u}\,du
=
\lambda\exp\!\left((\lambda+1)\varepsilon+\frac{1}{2}(\lambda+1)^2\sigma_v^2\right)
\Phi\!\left(-\frac{\varepsilon}{\sigma_v}-(\lambda+1)\sigma_v\right),
\nonumber\\
\Delta_1(\lambda)
&=
\int_{0}^{\infty}e^{-u}\lambda e^{-\lambda u}\,du
=
\frac{\lambda}{\lambda+1}.
\label{DeltasExpClosed}
\end{align}

\begin{lemma}\label{LemmaAuxiliarExp_app}
Under \(v\sim N(0,\sigma_v^2)\), \(u\sim Exp(\lambda)\), and \(v\independent u\), the closed forms in \eqref{DeltasExpClosed} hold.
\end{lemma}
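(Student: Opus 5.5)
The plan is to verify each of the three identities in \eqref{DeltasExpClosed} by direct integration, completing the square in the exponent, in the same spirit as the proof of Lemma \ref{LemmaAuxiliar}. Independence of \(u_i\) and \(v_i\) only serves to justify \(f_{v_i\mid u_i}(v\mid u;\theta_v)=\phi_{\sigma_v}(v)\). The integrals are therefore one-dimensional integrals of Gaussian-times-exponential kernels over \([0,\infty)\). All integrands are nonnegative and bounded by integrable functions, so finiteness and manipulations under the integral sign are immediate.

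\textbf{Step 1 (\(\Delta_2\)).} For a generic rate \(r>0\), I will first establish
\[
\int_0^\infty \phi_{\sigma_v}(\varepsilon+u)e^{-ru}\,du=\exp\!\left(r\varepsilon+\tfrac12 r^2\sigma_v^2\right)\Phi\!\left(-\tfrac{\varepsilon}{\sigma_v}-r\sigma_v\right).
\]
The key algebraic identity is
\[
-\frac{(\varepsilon+u)^2}{2\sigma_v^2}-ru=-\frac{(\varepsilon+u+r\sigma_v^2)^2}{2\sigma_v^2}+r\varepsilon+\tfrac12 r^2\sigma_v^2,
\]
which is checked by expanding the square on the right. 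It gives
\[
\phi_{\sigma_v}(\varepsilon+u)e^{-ru}=e^{r\varepsilon+r^2\sigma_v^2/2}\,\phi_{\sigma_v}(\varepsilon+r\sigma_v^2+u).
\]
Substituting \(w=(\varepsilon+r\sigma_v^2+u)/\sigma_v\) then turns the integral into
\[
\Bigl(1-\Phi\bigl((\varepsilon+r\sigma_v^2)/\sigma_v\bigr)\Bigr)=\Phi\!\left(-\tfrac{\varepsilon}{\sigma_v}-r\sigma_v\right),
\]
using the symmetry of \(\Phi\). Multiplying the generic formula by \(\lambda\) and setting \(r=\lambda\) yields \(\Delta_2(\varepsilon;\sigma_v,\lambda)\).

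\textbf{Step 2 (\(\Delta_1\)).} Since \(e^{-u}\lambda e^{-\lambda u}=\lambda e^{-(\lambda+1)u}\), the same generic formula with \(r=\lambda+1\), multiplied by the prefactor \(\lambda\) (not \(\lambda+1\)), gives \(\Delta_1(\varepsilon;\sigma_v,\lambda)\) exactly as stated.

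\textbf{Step 3 (\(\Delta_1(\lambda)\)).} This is the elementary integral
\[
\lambda\int_0^\infty e^{-(\lambda+1)u}\,du=\frac{\lambda}{\lambda+1}.
\]

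The only step I expect to require care is the sign bookkeeping in Step 1, in two places. The lower limit \(u=0\) maps to \(w=(\varepsilon+r\sigma_v^2)/\sigma_v\), and the upper tail \(1-\Phi(\cdot)\) must be converted into \(\Phi(-\cdot)\). This conversion is what produces the argument \(-\varepsilon/\sigma_v-r\sigma_v\). It is also the point to cross-check against the general expression in the proof of Lemma \ref{LemmaAuxiliar}, whose Gaussian completion has the same structure.
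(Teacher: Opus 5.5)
Your proposal is correct and follows essentially the same route as the paper: complete the square in the Gaussian-times-exponential exponent, standardize to obtain an upper-tail probability rewritten as $\Phi(-\varepsilon/\sigma_v-r\sigma_v)$, then reuse the formula with $\lambda+1$ in the exponent (keeping the prefactor $\lambda$), and evaluate $\Delta_1(\lambda)$ as an elementary integral. The only difference is cosmetic: the paper first substitutes $t=\varepsilon+u$ before completing the square, whereas you complete the square directly in $u$ for a generic rate $r$.
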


\begin{proof}[Proof of Lemma \ref{LemmaAuxiliarExp_app}]
Start with
\[
\Delta_2(\varepsilon;\sigma_v,\lambda)
=
\int_{0}^{\infty}\phi_{\sigma_v}(\varepsilon+u)\lambda e^{-\lambda u}\,du.
\]
Use the change of variables \(t=\varepsilon+u\), so \(u=t-\varepsilon\) and \(t\in[\varepsilon,\infty)\):
\begin{align*}
\Delta_2(\varepsilon;\sigma_v,\lambda)
&=
\lambda e^{\lambda \varepsilon}\int_{\varepsilon}^{\infty}\phi_{\sigma_v}(t)e^{-\lambda t}\,dt
\\
&=
\lambda e^{\lambda \varepsilon}\int_{\varepsilon}^{\infty}\frac{1}{\sigma_v\sqrt{2\pi}}
\exp\!\left(-\frac{t^2}{2\sigma_v^2}-\lambda t\right)\,dt.
\end{align*}
Complete the square:
\[
-\frac{t^2}{2\sigma_v^2}-\lambda t
=
-\frac{(t+\lambda\sigma_v^2)^2}{2\sigma_v^2}
+\frac{1}{2}\lambda^2\sigma_v^2.
\]
Therefore,
\[
\Delta_2(\varepsilon;\sigma_v,\lambda)
=
\lambda\exp\!\left(\lambda \varepsilon+\frac{1}{2}\lambda^2\sigma_v^2\right)
\int_{\varepsilon}^{\infty}\frac{1}{\sigma_v\sqrt{2\pi}}
\exp\!\left(-\frac{(t+\lambda\sigma_v^2)^2}{2\sigma_v^2}\right)\,dt.
\]
Let \(s=(t+\lambda\sigma_v^2)/\sigma_v\), so \(dt=\sigma_v ds\), and the lower limit is
\[
s_0=\frac{\varepsilon+\lambda\sigma_v^2}{\sigma_v}
=
\frac{\varepsilon}{\sigma_v}+\lambda\sigma_v.
\]
Hence,
\[
\Delta_2(\varepsilon;\sigma_v,\lambda)
=
\lambda\exp\!\left(\lambda \varepsilon+\frac{1}{2}\lambda^2\sigma_v^2\right)
\int_{s_0}^{\infty}\frac{1}{\sqrt{2\pi}}e^{-s^2/2}\,ds
=
\lambda\exp\!\left(\lambda \varepsilon+\frac{1}{2}\lambda^2\sigma_v^2\right)\Phi(-s_0),
\]
which gives the first expression. For \(\Delta_1(\varepsilon;\sigma_v,\lambda)\), note that \(e^{-u}\lambda e^{-\lambda u}=\lambda e^{-(\lambda+1)u}\). Repeating the same derivation with \(\lambda\) replaced by \(\lambda+1\) yields the second expression. Finally,
\[
\Delta_1(\lambda)
=
\int_{0}^{\infty}e^{-u}\lambda e^{-\lambda u}\,du
=
\lambda\int_{0}^{\infty}e^{-(\lambda+1)u}\,du
=
\frac{\lambda}{\lambda+1}.
\]
\end{proof}

Under the revised clipped-envelope formulation, define
\begin{align*}
\underline D_{Exp}(\varepsilon;c,b)
&:=
\int_0^\infty
\big(\phi_{\sigma_v}(\varepsilon+u)-b\big)_+
\big(\lambda e^{-\lambda u}-c\big)_+\,du,
\\
\underline N_{Exp}(\varepsilon;c,b)
&:=
\int_0^\infty
e^{-u}
\big(\phi_{\sigma_v}(\varepsilon+u)-b\big)_+
\big(\lambda e^{-\lambda u}-c\big)_+\,du.
\end{align*}
The corresponding upper-envelope objects are
\begin{align*}
\overline D_{Exp}(\varepsilon;c,b)
&=
\Delta_2(\varepsilon;\sigma_v,\lambda)+c\,\Delta_2(\varepsilon;\sigma_v)+b,
\\
\overline N_{Exp}(\varepsilon;c,b)
&=
\Delta_1(\varepsilon;\sigma_v,\lambda)+c\,\Delta_1(\varepsilon;\sigma_v)+b\,\Delta_1(\lambda)+bc.
\end{align*}
Therefore,
\[
\frac{\underline N_{Exp}(\varepsilon;c,b)}{\overline D_{Exp}(\varepsilon;c,b)}
\leq
E[\exp(-u_i)\mid \varepsilon_i=\varepsilon]
\leq
\frac{\overline N_{Exp}(\varepsilon;c,b)}{\underline D_{Exp}(\varepsilon;c,b)},
\]
provided \(\underline D_{Exp}(\varepsilon;c,b)>0\).

For inference, let \(S_{\kappa}\) denote the smooth clipping operator defined in Appendix \ref{AppSmoothClip}, and define
\begin{align}
\underline D_{Exp,\kappa}(\varepsilon;c,b)
&:=
\int_0^\infty
S_{\kappa}\!\big(\phi_{\sigma_v}(\varepsilon+u);b\big)
S_{\kappa}\!\big(\lambda e^{-\lambda u};c\big)\,du,
\nonumber\\
\underline N_{Exp,\kappa}(\varepsilon;c,b)
&:=
\int_0^\infty
e^{-u}
S_{\kappa}\!\big(\phi_{\sigma_v}(\varepsilon+u);b\big)
S_{\kappa}\!\big(\lambda e^{-\lambda u};c\big)\,du.
\label{EqExpSmoothObjects}
\end{align}
Then, for fixed \((c,\tau_0)\), the smoothed breakdown frontier \(b_{\kappa}^{Exp}(c,\tau_0)\) is defined implicitly by
\begin{equation}\label{EqExpFrontierImplicit}
G_{\kappa}^{Exp}(b;c,\tau_0,\vartheta)
:=
\underline N_{Exp,\kappa}(\varepsilon;c,b)
-
\tau_0\,\overline D_{Exp}(\varepsilon;c,b)
=0.
\end{equation}

\subsubsection{Normal--Half-Normal null: \(u_i\sim HN(\sigma_u^2)\)}

Assume \(u_i\sim HN(\sigma_u^2)\) on \([0,\infty)\), with density
\[
f_u(u)=\sqrt{\frac{2}{\pi}}\frac{1}{\sigma_u}\exp\!\left(-\frac{u^2}{2\sigma_u^2}\right),
\]
and \(v_i\sim N(0,\sigma_v^2)\) independent. Define
\[
\sigma^2=\sigma_v^2+\sigma_u^2,
\qquad
\kappa_0=\frac{\sigma_u}{\sigma_v\sigma}.
\]
Then the benchmark integral objects are
\begin{align}
\Delta_2(\varepsilon;\sigma_v,\sigma_u)
&=
\int_{0}^{\infty}
\phi_{\sigma_v}(\varepsilon+u)\sqrt{\frac{2}{\pi}}\frac{1}{\sigma_u}
\exp\!\left(-\frac{u^2}{2\sigma_u^2}\right)\,du
=
2\frac{1}{\sqrt{2\pi\sigma^2}}
\exp\!\left(-\frac{\varepsilon^2}{2\sigma^2}\right)\Phi(-\kappa_0 \varepsilon),
\nonumber\\
\Delta_1(\varepsilon;\sigma_v,\sigma_u)
&=
\int_{0}^{\infty}
e^{-u}\phi_{\sigma_v}(\varepsilon+u)\sqrt{\frac{2}{\pi}}\frac{1}{\sigma_u}
\exp\!\left(-\frac{u^2}{2\sigma_u^2}\right)\,du
\nonumber\\
&=
2\frac{1}{\sqrt{2\pi\sigma^2}}
\exp\!\left(-\frac{\varepsilon^2}{2\sigma^2}\right)
\exp\!\left(\frac{\sigma_u^2}{\sigma^2}\varepsilon+\frac{\sigma_u^2\sigma_v^2}{2\sigma^2}\right)
\Phi\!\left(-\kappa_0 \varepsilon-\frac{\sigma_u\sigma_v}{\sigma}\right),
\nonumber\\
\Delta_1(\sigma_u)
&=
\int_{0}^{\infty}
e^{-u}\sqrt{\frac{2}{\pi}}\frac{1}{\sigma_u}
\exp\!\left(-\frac{u^2}{2\sigma_u^2}\right)\,du
=
2\exp\!\left(\frac{1}{2}\sigma_u^2\right)\Phi(-\sigma_u).
\label{DeltasHNClosed}
\end{align}

\begin{lemma}\label{LemmaAuxiliarHalfNormal_app}
Under \(v\sim N(0,\sigma_v^2)\), \(u\sim HN(\sigma_u^2)\), and \(v\independent u\), the closed forms in \eqref{DeltasHNClosed} hold.
\end{lemma}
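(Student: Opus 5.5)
We verify the three closed forms in \eqref{DeltasHNClosed} by direct integration, completing the square in the Gaussian exponent in each case, exactly as in the proofs of Lemma \ref{LemmaAuxiliar} and Lemma \ref{LemmaAuxiliarExp_app}. The third identity is the simplest, so we do it first. Write
\[
-u-\frac{u^2}{2\sigma_u^2}=-\frac{(u+\sigma_u^2)^2}{2\sigma_u^2}+\frac{\sigma_u^2}{2}.
\]
Substituting $s=(u+\sigma_u^2)/\sigma_u$ turns the integral into $2e^{\sigma_u^2/2}\int_{\sigma_u}^\infty \frac{1}{\sqrt{2\pi}}e^{-s^2/2}\,ds$. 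This equals $2\exp(\sigma_u^2/2)\Phi(-\sigma_u)$, as claimed.

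For $\Delta_2(\varepsilon;\sigma_v,\sigma_u)$, the product of the two Gaussian kernels has exponent
\[
-\frac{(\varepsilon+u)^2}{2\sigma_v^2}-\frac{u^2}{2\sigma_u^2}.
\]
We collect the quadratic and linear terms in $u$: the coefficient of $u^2$ is $-\sigma^2/(2\sigma_v^2\sigma_u^2)$ and the coefficient of $u$ is $-\varepsilon/\sigma_v^2$. Completing the square around $u^\ast=-\varepsilon\sigma_u^2/\sigma^2$ gives the exponent
\[
-\frac{\sigma^2}{2\sigma_v^2\sigma_u^2}(u-u^\ast)^2-\frac{\varepsilon^2}{2\sigma^2},
\]
because $\varepsilon^2/\sigma_v^2-\varepsilon^2\sigma_u^2/(\sigma_v^2\sigma^2)=\varepsilon^2/\sigma^2$. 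Let $\omega:=\sigma_v\sigma_u/\sigma$ be the conditional standard deviation. The remaining integral over $[0,\infty)$ is then $\sqrt{2\pi}\,\omega\,\Phi(u^\ast/\omega)=\sqrt{2\pi}\,\omega\,\Phi(-\kappa_0\varepsilon)$, since $u^\ast/\omega=-\varepsilon\sigma_u/(\sigma_v\sigma)$. The prefactor is
\[
\frac{1}{\sigma_v\sqrt{2\pi}}\sqrt{\frac{2}{\pi}}\frac{1}{\sigma_u},
\]
and multiplying it by $\sqrt{2\pi}\,\omega$ gives $2/\sqrt{2\pi\sigma^2}$. This yields the stated formula.

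For $\Delta_1(\varepsilon;\sigma_v,\sigma_u)$, the only change is the extra term $-u$ in the exponent. The centre of the square therefore shifts to
\[
u^{\ast\ast}=u^\ast-\omega^2=-\frac{\varepsilon\sigma_u^2}{\sigma^2}-\frac{\sigma_v^2\sigma_u^2}{\sigma^2}.
\]
Completing the square produces the additional constant $(u^{\ast\ast2}-u^{\ast2})/(2\omega^2)$ in the exponent. Since $u^{\ast\ast2}-u^{\ast2}=-2u^\ast\omega^2+\omega^4$, this constant equals $-u^\ast+\omega^2/2=\varepsilon\sigma_u^2/\sigma^2+\sigma_u^2\sigma_v^2/(2\sigma^2)$, which matches the second exponential factor in the statement. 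The normal tail term becomes
\[
\Phi(u^{\ast\ast}/\omega)=\Phi\!\left(-\kappa_0\varepsilon-\frac{\sigma_u\sigma_v}{\sigma}\right),
\]
and the prefactor is the same as for $\Delta_2$.

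Alternatively, $\Delta_1$ is the general integral $I$ in the proof of Lemma \ref{LemmaAuxiliar} with $(a_1,a_2,a_3)=(1,1,1)$, $\mu=0$, and $1-\Phi(0)=1/2$ accounting for the factor $2$; this gives a consistency check. No step is a real obstacle. The only care needed is the sign bookkeeping in the linear term and the simplification $\varepsilon^2/\sigma_v^2-\varepsilon^2\sigma_u^2/(\sigma_v^2\sigma^2)=\varepsilon^2/\sigma^2$, which is where an error would most likely occur.
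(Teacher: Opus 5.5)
Your proposal is correct and follows essentially the same route as the paper: complete the square in the combined Gaussian exponent (the paper's $A$, $B$ with shift $\varepsilon\sigma_u^2/\sigma^2$ correspond to your $u^\ast$ and $\omega$), then reduce the integral over $[0,\infty)$ to a standard normal tail. You also carry out the $\Delta_1(\varepsilon;\sigma_v,\sigma_u)$ and $\Delta_1(\sigma_u)$ computations explicitly, which the paper only asserts follow by the same argument, and your algebra for the shifted centre $u^{\ast\ast}=u^\ast-\omega^2$ and the extra constant $-u^\ast+\omega^2/2$ checks out.
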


\begin{proof}[Proof of Lemma \ref{LemmaAuxiliarHalfNormal_app}]
Write
\[
\Delta_2(\varepsilon;\sigma_v,\sigma_u)=\int_{0}^{\infty}\phi_{\sigma_v}(\varepsilon+u)f_u(u)\,du,
\]
with \(f_u\) half-normal. Using \(\phi_{\sigma_v}(\varepsilon+u)\propto\exp(-(\varepsilon+u)^2/(2\sigma_v^2))\) and \(f_u(u)\propto\exp(-u^2/(2\sigma_u^2))\), the exponent in \(u\) is
\[
-\frac{(\varepsilon+u)^2}{2\sigma_v^2}-\frac{u^2}{2\sigma_u^2}
=
-\frac{\varepsilon^2}{2\sigma_v^2}
-\frac{\sigma^2}{2\sigma_v^2\sigma_u^2}u^2
-\frac{\varepsilon}{\sigma_v^2}u.
\]
Let
\[
A=\frac{\sigma^2}{2\sigma_v^2\sigma_u^2},
\qquad
B=\frac{\varepsilon}{\sigma_v^2}.
\]
Completing the square gives
\[
-Au^2-Bu
=
-A\left(u+\frac{B}{2A}\right)^2+\frac{B^2}{4A},
\qquad
\frac{B}{2A}=\frac{\varepsilon\sigma_u^2}{\sigma^2}.
\]
Moreover,
\[
-\frac{\varepsilon^2}{2\sigma_v^2}+\frac{B^2}{4A}
=
-\frac{\varepsilon^2}{2\sigma^2}.
\]
Hence
\[
\Delta_2(\varepsilon;\sigma_v,\sigma_u)
=
K\exp\!\left(-\frac{\varepsilon^2}{2\sigma^2}\right)
\int_{0}^{\infty}\exp\!\left(-A\left(u+\frac{\varepsilon\sigma_u^2}{\sigma^2}\right)^2\right)\,du,
\]
where \(K=\sqrt{2/\pi}\,(\sigma_v\sigma_u\sqrt{2\pi})^{-1}\). Set
\[
w=\sqrt{2A}\left(u+\frac{\varepsilon\sigma_u^2}{\sigma^2}\right),
\]
so \(du=(1/\sqrt{2A})\,dw\). At \(u=0\),
\[
w_0=\sqrt{2A}\frac{\varepsilon\sigma_u^2}{\sigma^2}
=
\kappa_0 \varepsilon.
\]
Therefore,
\[
\int_{0}^{\infty}\exp\!\left(-A\left(u+\frac{\varepsilon\sigma_u^2}{\sigma^2}\right)^2\right)\,du
=
\frac{\sqrt{2\pi}}{\sqrt{2A}}\Phi(-w_0)
=
\frac{\sqrt{2\pi}\,\sigma_v\sigma_u}{\sigma}\Phi(-\kappa_0 \varepsilon).
\]
Multiplying constants yields the first expression. The remaining two expressions follow by the same square-completion argument, with the linear term shifted by \(e^{-u}\) when computing \(\Delta_1(\varepsilon;\sigma_v,\sigma_u)\), and directly in the one-dimensional half-normal integral for \(\Delta_1(\sigma_u)\).
\end{proof}

Under the revised clipped-envelope formulation, define
\begin{align*}
\underline D_{HN}(\varepsilon;c,b)
&:=
\int_0^\infty
\big(\phi_{\sigma_v}(\varepsilon+u)-b\big)_+
\left(
\sqrt{\frac{2}{\pi}}\frac{1}{\sigma_u}\exp\!\left(-\frac{u^2}{2\sigma_u^2}\right)-c
\right)_+du,
\\
\underline N_{HN}(\varepsilon;c,b)
&:=
\int_0^\infty
e^{-u}
\big(\phi_{\sigma_v}(\varepsilon+u)-b\big)_+
\left(
\sqrt{\frac{2}{\pi}}\frac{1}{\sigma_u}\exp\!\left(-\frac{u^2}{2\sigma_u^2}\right)-c
\right)_+du.
\end{align*}
The corresponding upper-envelope objects are
\begin{align*}
\overline D_{HN}(\varepsilon;c,b)
&=
\Delta_2(\varepsilon;\sigma_v,\sigma_u)+c\,\Delta_2(\varepsilon;\sigma_v)+b,
\\
\overline N_{HN}(\varepsilon;c,b)
&=
\Delta_1(\varepsilon;\sigma_v,\sigma_u)+c\,\Delta_1(\varepsilon;\sigma_v)+b\,\Delta_1(\sigma_u)+bc.
\end{align*}
Therefore,
\[
\frac{\underline N_{HN}(\varepsilon;c,b)}{\overline D_{HN}(\varepsilon;c,b)}
\leq
E[\exp(-u_i)\mid \varepsilon_i=\varepsilon]
\leq
\frac{\overline N_{HN}(\varepsilon;c,b)}{\underline D_{HN}(\varepsilon;c,b)},
\]
provided \(\underline D_{HN}(\varepsilon;c,b)>0\).

For inference, define the smoothed lower-envelope objects
\begin{align}
\underline D_{HN,\kappa}(\varepsilon;c,b)
&:=
\int_0^\infty
S_{\kappa}\!\big(\phi_{\sigma_v}(\varepsilon+u);b\big)
S_{\kappa}\!\left(
\sqrt{\frac{2}{\pi}}\frac{1}{\sigma_u}\exp\!\left(-\frac{u^2}{2\sigma_u^2}\right);c
\right)\,du,
\nonumber\\
\underline N_{HN,\kappa}(\varepsilon;c,b)
&:=
\int_0^\infty
e^{-u}
S_{\kappa}\!\big(\phi_{\sigma_v}(\varepsilon+u);b\big)
S_{\kappa}\!\left(
\sqrt{\frac{2}{\pi}}\frac{1}{\sigma_u}\exp\!\left(-\frac{u^2}{2\sigma_u^2}\right);c
\right)\,du.
\label{EqHNSmoothObjects}
\end{align}
Then, for fixed \((c,\tau_0)\), the smoothed breakdown frontier \(b_{\kappa}^{HN}(c,\tau_0)\) is defined implicitly by
\begin{equation}\label{EqHNFrontierImplicit}
G_{\kappa}^{HN}(b;c,\tau_0,\vartheta)
:=
\underline N_{HN,\kappa}(\varepsilon;c,b)
-
\tau_0\,\overline D_{HN}(\varepsilon;c,b)
=0.
\end{equation}

\subsection{The Normal--Exponential and Normal--Half-Normal models continued: estimation and inference procedures for the breakdown frontier}\label{AppAltNulls_25_detail}

This subsection replicates the revised inference strategy from Section 2.5 for the two alternative null models for \(u_i\), while maintaining \(v_i\sim N(0,\sigma_v^2)\) and \(v_i\independent u_i\). As in the main text, let
\[
\varepsilon_i \equiv y_i-f(\theta_x,x_i),
\]
and let \(\vartheta\) collect all parameters, including \(\theta_x\) and the nuisance distributional parameters. Throughout, \(\Phi(\cdot)\) and \(\phi(\cdot)\) denote the standard normal cdf and pdf, and we use the inverse Mills ratio
\[
\Lambda(z)\equiv \frac{\phi(z)}{\Phi(z)},
\qquad
\Lambda'(z)=-\Lambda(z)\big(z+\Lambda(z)\big).
\]
We also use the smoothing operators \(S_\kappa\) and \(B_\eta\) defined in Appendix \ref{AppSmoothClip}.

\subsubsection{Case A: \(v_i\sim N(0,\sigma_v^2)\) and \(u_i\sim Exp(\lambda)\)}\label{AppAltNulls_25_exp}

Assume \(u_i\sim Exp(\lambda)\) on \([0,\infty)\) with \(\lambda>0\), and \(v_i\sim N(0,\sigma_v^2)\) independent. Then the composite-error density is
\begin{equation}\label{fu_exp_app}
f_{\varepsilon}(\varepsilon;\sigma_v,\lambda)
=
\lambda \exp\!\left(\lambda \varepsilon+\frac{1}{2}\lambda^2\sigma_v^2\right)
\Phi\!\left(z(\varepsilon;\sigma_v,\lambda)\right),
\qquad
z(\varepsilon;\sigma_v,\lambda)\equiv -\frac{\varepsilon}{\sigma_v}-\lambda\sigma_v.
\end{equation}

\paragraph{Log-likelihood.}
For an i.i.d.\ sample, the log-likelihood is
\begin{equation}\label{LogLikelihood_exp_app}
LnL(\vartheta)=\sum_{i=1}^n \ell_i(\vartheta),
\qquad
\ell_i(\vartheta)=\ln\lambda+\lambda \varepsilon_i+\frac{1}{2}\lambda^2\sigma_v^2+\ln\Phi(z_i),
\quad
z_i\equiv z(\varepsilon_i;\sigma_v,\lambda).
\end{equation}

\paragraph{Score and Hessian.}
Define
\[
g_i\equiv \frac{\partial \varepsilon_i}{\partial \theta_x}
=
-\frac{\partial f(\theta_x,x_i)}{\partial \theta_x},
\qquad
G_i\equiv \frac{\partial^2 \varepsilon_i}{\partial \theta_x\partial \theta_x'}.
\]
Using
\[
\frac{\partial z_i}{\partial \varepsilon_i}=-\frac{1}{\sigma_v},
\qquad
\frac{\partial z_i}{\partial \sigma_v}=\frac{\varepsilon_i}{\sigma_v^2}-\lambda,
\qquad
\frac{\partial z_i}{\partial \lambda}=-\sigma_v,
\]
we obtain
\begin{align}
\frac{\partial \ell_i}{\partial \varepsilon_i}
&=
\lambda-\frac{1}{\sigma_v}\Lambda(z_i),
\nonumber\\
\frac{\partial \ell_i}{\partial \theta_x}
&=
-\left(\lambda-\frac{1}{\sigma_v}\Lambda(z_i)\right)g_i,
\nonumber\\
\frac{\partial \ell_i}{\partial \sigma_v}
&=
\lambda^2\sigma_v+\Lambda(z_i)\left(\frac{\varepsilon_i}{\sigma_v^2}-\lambda\right),
\nonumber\\
\frac{\partial \ell_i}{\partial \lambda}
&=
\frac{1}{\lambda}+\varepsilon_i+\lambda\sigma_v^2-\sigma_v\Lambda(z_i),
\label{score_exp_app}
\end{align}
and the scalar second derivatives
\begin{align}
\frac{\partial^2 \ell_i}{\partial \varepsilon_i^2}
&=
\frac{\Lambda'(z_i)}{\sigma_v^2},
\nonumber\\
\frac{\partial^2 \ell_i}{\partial \varepsilon_i\,\partial \sigma_v}
&=
-\frac{1}{\sigma_v}\Lambda'(z_i)\left(\frac{\varepsilon_i}{\sigma_v^2}-\lambda\right)
+\frac{1}{\sigma_v^2}\Lambda(z_i),
\nonumber\\
\frac{\partial^2 \ell_i}{\partial \varepsilon_i\,\partial \lambda}
&=
1+\Lambda'(z_i),
\nonumber\\
\frac{\partial^2 \ell_i}{\partial \sigma_v^2}
&=
\lambda^2+\Lambda'(z_i)\left(\frac{\varepsilon_i}{\sigma_v^2}-\lambda\right)^2
+\Lambda(z_i)\left(-\frac{2\varepsilon_i}{\sigma_v^3}\right),
\nonumber\\
\frac{\partial^2 \ell_i}{\partial \lambda^2}
&=
-\frac{1}{\lambda^2}+\sigma_v^2+\sigma_v^2\Lambda'(z_i),
\nonumber\\
\frac{\partial^2 \ell_i}{\partial \sigma_v\,\partial \lambda}
&=
2\lambda\sigma_v+\Lambda'(z_i)\left(\frac{\varepsilon_i}{\sigma_v^2}-\lambda\right)(-\sigma_v)-\Lambda(z_i).
\label{second_exp_scalar}
\end{align}

Under standard regularity conditions,
\[
\sqrt{n}(\widehat{\vartheta}_n-\vartheta)\xrightarrow{d}N(0,\mathcal{V}_{\vartheta}),
\]
with \(\mathcal{V}_{\vartheta}\) estimated by the sandwich form described in the main text.

For each fixed \((\kappa,\eta)\), define the smoothed frontier equation
\[
\widetilde G_{\kappa,\eta}^{Exp}(\beta;c,\tau_0,\vartheta)
:=
G_{\kappa}^{Exp}(B_\eta(\beta);c,\tau_0,\vartheta),
\]
where \(G_{\kappa}^{Exp}\) is defined in \eqref{EqExpFrontierImplicit}. Let \(\beta_{\kappa,\eta}^{Exp}(c,\tau_0)\) denote the unique solution in \(\mathbb R\) to
\[
\widetilde G_{\kappa,\eta}^{Exp}(\beta;c,\tau_0,\vartheta)=0,
\]
and define
\[
b_{\kappa,\eta}^{Exp}(c,\tau_0)
=
B_\eta\big(\beta_{\kappa,\eta}^{Exp}(c,\tau_0)\big).
\]
Its plug-in estimator is obtained by solving
\[
\widetilde G_{\kappa_n,\eta_n}^{Exp}(\widehat\beta_{n,\kappa_n,\eta_n}^{Exp}(c,\tau_0);c,\tau_0,\widehat{\vartheta}_n)=0,
\]
and setting
\[
\widehat b_{n,\kappa_n,\eta_n}^{Exp}(c,\tau_0)
=
B_{\eta_n}\big(\widehat\beta_{n,\kappa_n,\eta_n}^{Exp}(c,\tau_0)\big).
\]

If
\[
\frac{\partial}{\partial \beta}
\widetilde G_{\kappa,\eta}^{Exp}(\beta_{\kappa,\eta}^{Exp}(c,\tau_0);c,\tau_0,\vartheta)\neq 0,
\]
then
\begin{equation}\label{Asy_beta_exp_detail_new}
\sqrt{n}\big(\widehat\beta_{n,\kappa,\eta}^{Exp}(c,\tau_0)-\beta_{\kappa,\eta}^{Exp}(c,\tau_0)\big)
\xrightarrow{d}
N\big(0,\sigma_{\beta_{\kappa,\eta}^{Exp}(c,\tau_0)}^2\big),
\end{equation}
where
\begin{equation}\label{Var_beta_exp_app_new}
\sigma_{\beta_{\kappa,\eta}^{Exp}(c,\tau_0)}^2
=
D_{\beta_{\kappa,\eta}^{Exp}}\,\mathcal{V}_{\vartheta}\,D_{\beta_{\kappa,\eta}^{Exp}}',
\qquad
D_{\beta_{\kappa,\eta}^{Exp}}
=
-
\left(
\frac{\partial}{\partial \beta}
\widetilde G_{\kappa,\eta}^{Exp}(\beta_{\kappa,\eta}^{Exp};c,\tau_0,\vartheta)
\right)^{-1}
\frac{\partial}{\partial \vartheta'}
\widetilde G_{\kappa,\eta}^{Exp}(\beta_{\kappa,\eta}^{Exp};c,\tau_0,\vartheta).
\end{equation}

Applying the delta method through \(b_{\kappa,\eta}^{Exp}=B_\eta(\beta_{\kappa,\eta}^{Exp})\), we obtain
\begin{equation}\label{Asy_b_exp_detail_new}
\sqrt{n}\big(\widehat b_{n,\kappa,\eta}^{Exp}(c,\tau_0)-b_{\kappa,\eta}^{Exp}(c,\tau_0)\big)
\xrightarrow{d}
N\big(0,\sigma_{b_{\kappa,\eta}^{Exp}(c,\tau_0)}^2\big),
\end{equation}
where
\begin{equation}\label{Var_b_exp_app_new}
\sigma_{b_{\kappa,\eta}^{Exp}(c,\tau_0)}^2
=
D_{b_{\kappa,\eta}^{Exp}}\,\mathcal{V}_{\vartheta}\,D_{b_{\kappa,\eta}^{Exp}}',
\qquad
D_{b_{\kappa,\eta}^{Exp}}
=
B_\eta'\big(\beta_{\kappa,\eta}^{Exp}(c,\tau_0)\big)\,
D_{\beta_{\kappa,\eta}^{Exp}}.
\end{equation}
A feasible estimator is obtained by plugging in \(\widehat{\vartheta}_n\), \(\widehat{\mathcal{V}}_{\vartheta,n}\), and \(\widehat\beta_{n,\kappa_n,\eta_n}^{Exp}(c,\tau_0)\).

\subsubsection{Case B: \(v_i\sim N(0,\sigma_v^2)\) and \(u_i\sim HN(\sigma_u^2)\)}\label{AppAltNulls_25_hn}

Assume \(u_i\sim HN(\sigma_u^2)\) on \([0,\infty)\), with density
\[
f_u(u)=\sqrt{\frac{2}{\pi}}\frac{1}{\sigma_u}\exp\!\left(-\frac{u^2}{2\sigma_u^2}\right),
\]
and \(v_i\sim N(0,\sigma_v^2)\) independent. Define
\[
\sigma\equiv\sqrt{\sigma_v^2+\sigma_u^2},
\qquad
\kappa_0\equiv \frac{\sigma_u}{\sigma_v\sigma}.
\]
Then the composite-error density is
\begin{equation}\label{fu_hn_app}
f_{\varepsilon}(\varepsilon;\sigma_v,\sigma_u)
=
2\frac{1}{\sqrt{2\pi\sigma^2}}
\exp\!\left(-\frac{\varepsilon^2}{2\sigma^2}\right)
\Phi\!\left(w(\varepsilon;\sigma_v,\sigma_u)\right),
\qquad
w(\varepsilon;\sigma_v,\sigma_u)\equiv -\kappa_0 \varepsilon.
\end{equation}

\paragraph{Log-likelihood.}
\begin{equation}\label{LogLikelihood_hn_app}
LnL(\vartheta)=\sum_{i=1}^n \ell_i(\vartheta),
\qquad
\ell_i(\vartheta)
=
\ln 2-\frac{1}{2}\ln(2\pi)-\ln \sigma-\frac{\varepsilon_i^2}{2\sigma^2}+\ln\Phi(w_i),
\quad
w_i\equiv -\kappa_0 \varepsilon_i.
\end{equation}

\paragraph{Score and Hessian.}
Let \(g_i=\partial \varepsilon_i/\partial \theta_x\) and \(G_i=\partial^2 \varepsilon_i/\partial \theta_x\partial \theta_x'\). Let \(\Lambda(w)=\phi(w)/\Phi(w)\). Then
\begin{align}
\frac{\partial \ell_i}{\partial \varepsilon_i}
&=
-\frac{\varepsilon_i}{\sigma^2}-\kappa_0 \Lambda(w_i),
\nonumber\\
\frac{\partial \ell_i}{\partial \theta_x}
&=
-\left(-\frac{\varepsilon_i}{\sigma^2}-\kappa_0 \Lambda(w_i)\right)g_i,
\nonumber\\
\frac{\partial \ell_i}{\partial \sigma_v}
&=
\left(-\frac{1}{\sigma}+\frac{\varepsilon_i^2}{\sigma^3}\right)\frac{\partial\sigma}{\partial\sigma_v}
+\Lambda(w_i)\frac{\partial w_i}{\partial\sigma_v},
\nonumber\\
\frac{\partial \ell_i}{\partial \sigma_u}
&=
\left(-\frac{1}{\sigma}+\frac{\varepsilon_i^2}{\sigma^3}\right)\frac{\partial\sigma}{\partial\sigma_u}
+\Lambda(w_i)\frac{\partial w_i}{\partial\sigma_u},
\label{score_hn_app}
\end{align}
with
\begin{align}
\frac{\partial \sigma}{\partial \sigma_v}
&=
\frac{\sigma_v}{\sigma},
\qquad
\frac{\partial \sigma}{\partial \sigma_u}
=
\frac{\sigma_u}{\sigma},
\nonumber\\
\frac{\partial \kappa_0}{\partial \sigma_v}
&=
\kappa_0\left(-\frac{1}{\sigma_v}-\frac{\sigma_v}{\sigma^2}\right),
\qquad
\frac{\partial \kappa_0}{\partial \sigma_u}
=
\kappa_0\left(\frac{1}{\sigma_u}-\frac{\sigma_u}{\sigma^2}\right),
\nonumber\\
\frac{\partial w_i}{\partial \sigma_v}
&=
-\varepsilon_i\frac{\partial \kappa_0}{\partial \sigma_v},
\qquad
\frac{\partial w_i}{\partial \sigma_u}
=
-\varepsilon_i\frac{\partial \kappa_0}{\partial \sigma_u},
\qquad
\frac{\partial w_i}{\partial \varepsilon_i}
=
-\kappa_0.
\label{hn_primitives}
\end{align}
Under standard regularity conditions,
\[
\sqrt{n}(\widehat{\vartheta}_n-\vartheta)\xrightarrow{d}N(0,\mathcal{V}_{\vartheta}),
\]
again with sandwich variance estimator.

For each fixed \((\kappa,\eta)\), define the smoothed frontier equation
\[
\widetilde G_{\kappa,\eta}^{HN}(\beta;c,\tau_0,\vartheta)
:=
G_{\kappa}^{HN}(B_\eta(\beta);c,\tau_0,\vartheta),
\]
where \(G_{\kappa}^{HN}\) is defined in \eqref{EqHNFrontierImplicit}. Let \(\beta_{\kappa,\eta}^{HN}(c,\tau_0)\) denote the unique solution in \(\mathbb R\) to
\[
\widetilde G_{\kappa,\eta}^{HN}(\beta;c,\tau_0,\vartheta)=0,
\]
and define
\[
b_{\kappa,\eta}^{HN}(c,\tau_0)
=
B_\eta\big(\beta_{\kappa,\eta}^{HN}(c,\tau_0)\big).
\]
Its plug-in estimator is obtained by solving
\[
\widetilde G_{\kappa_n,\eta_n}^{HN}(\widehat\beta_{n,\kappa_n,\eta_n}^{HN}(c,\tau_0);c,\tau_0,\widehat{\vartheta}_n)=0,
\]
and setting
\[
\widehat b_{n,\kappa_n,\eta_n}^{HN}(c,\tau_0)
=
B_{\eta_n}\big(\widehat\beta_{n,\kappa_n,\eta_n}^{HN}(c,\tau_0)\big).
\]

If
\[
\frac{\partial}{\partial \beta}
\widetilde G_{\kappa,\eta}^{HN}(\beta_{\kappa,\eta}^{HN}(c,\tau_0);c,\tau_0,\vartheta)\neq 0,
\]
then
\begin{equation}\label{Asy_beta_hn_detail_new}
\sqrt{n}\big(\widehat\beta_{n,\kappa,\eta}^{HN}(c,\tau_0)-\beta_{\kappa,\eta}^{HN}(c,\tau_0)\big)
\xrightarrow{d}
N\big(0,\sigma_{\beta_{\kappa,\eta}^{HN}(c,\tau_0)}^2\big),
\end{equation}
where
\begin{equation}\label{Var_beta_hn_app_new}
\sigma_{\beta_{\kappa,\eta}^{HN}(c,\tau_0)}^2
=
D_{\beta_{\kappa,\eta}^{HN}}\,\mathcal{V}_{\vartheta}\,D_{\beta_{\kappa,\eta}^{HN}}',
\qquad
D_{\beta_{\kappa,\eta}^{HN}}
=
-
\left(
\frac{\partial}{\partial \beta}
\widetilde G_{\kappa,\eta}^{HN}(\beta_{\kappa,\eta}^{HN};c,\tau_0,\vartheta)
\right)^{-1}
\frac{\partial}{\partial \vartheta'}
\widetilde G_{\kappa,\eta}^{HN}(\beta_{\kappa,\eta}^{HN};c,\tau_0,\vartheta).
\end{equation}

Applying the delta method through \(b_{\kappa,\eta}^{HN}=B_\eta(\beta_{\kappa,\eta}^{HN})\), we obtain
\begin{equation}\label{Asy_b_hn_detail_new}
\sqrt{n}\big(\widehat b_{n,\kappa,\eta}^{HN}(c,\tau_0)-b_{\kappa,\eta}^{HN}(c,\tau_0)\big)
\xrightarrow{d}
N\big(0,\sigma_{b_{\kappa,\eta}^{HN}(c,\tau_0)}^2\big),
\end{equation}
where
\begin{equation}\label{Var_b_hn_app_new}
\sigma_{b_{\kappa,\eta}^{HN}(c,\tau_0)}^2
=
D_{b_{\kappa,\eta}^{HN}}\,\mathcal{V}_{\vartheta}\,D_{b_{\kappa,\eta}^{HN}}',
\qquad
D_{b_{\kappa,\eta}^{HN}}
=
B_\eta'\big(\beta_{\kappa,\eta}^{HN}(c,\tau_0)\big)\,
D_{\beta_{\kappa,\eta}^{HN}}.
\end{equation}
A feasible estimator is obtained by plugging in \(\widehat{\vartheta}_n\), \(\widehat{\mathcal{V}}_{\vartheta,n}\), and \(\widehat\beta_{n,\kappa_n,\eta_n}^{HN}(c,\tau_0)\).

\subsection{Details of the smoothing operators}\label{AppSmoothClip}

Recall first the clipping-smoothing function
\[
S_{\kappa}(x;a)
=
\kappa\log\!\left(1+\exp\!\left(\frac{x-a}{\kappa}\right)\right)
-
\kappa\log\!\left(1+\exp\!\left(\frac{-a}{\kappa}\right)\right),
\qquad \kappa>0,
\]
for \(x\geq 0\) and \(a\geq 0\). This function satisfies \(S_{\kappa}(0;a)=0\), \(S_{\kappa}(x;a)\geq 0\), and
\[
S_{\kappa}(x;a)\to (x-a)_+
\qquad\text{as}\qquad \kappa\downarrow 0.
\]

Its first derivatives are
\begin{align*}
\frac{\partial}{\partial x}S_{\kappa}(x;a)
&=
\frac{1}{1+\exp\!\left(-\frac{x-a}{\kappa}\right)},
\\
\frac{\partial}{\partial a}S_{\kappa}(x;a)
&=
-
\frac{1}{1+\exp\!\left(-\frac{x-a}{\kappa}\right)}
+
\frac{1}{1+\exp\!\left(\frac{a}{\kappa}\right)}.
\end{align*}
Its second derivative with respect to \(x\) is
\[
\frac{\partial^2}{\partial x^2}S_{\kappa}(x;a)
=
\frac{1}{\kappa}
\frac{\exp\!\left(-\frac{x-a}{\kappa}\right)}
{\left(1+\exp\!\left(-\frac{x-a}{\kappa}\right)\right)^2}.
\]
Moreover,
\[
\lim_{\kappa\downarrow 0}\frac{\partial}{\partial x}S_{\kappa}(x;a)
=
\mathbf{1}\{x>a\}
\qquad\text{for every }x\neq a.
\]

Next, recall the smooth positivity map
\[
B_{\eta}(\beta)
=
\eta\log\!\left(1+\exp\!\left(\frac{\beta}{\eta}\right)\right),
\qquad \eta>0,
\]
defined for \(\beta\in\mathbb R\). This function satisfies \(B_{\eta}(\beta)>0\) for every \(\beta\in\mathbb R\), and
\[
B_{\eta}(\beta)\to \beta_+
\qquad\text{as}\qquad \eta\downarrow 0.
\]

Its first derivative is
\[
\frac{\partial}{\partial \beta}B_{\eta}(\beta)
=
\frac{1}{1+\exp(-\beta/\eta)},
\]
and its second derivative is
\[
\frac{\partial^2}{\partial \beta^2}B_{\eta}(\beta)
=
\frac{1}{\eta}
\frac{\exp(-\beta/\eta)}
{\left(1+\exp(-\beta/\eta)\right)^2}.
\]
Moreover,
\[
\lim_{\eta\downarrow 0}\frac{\partial}{\partial \beta}B_{\eta}(\beta)
=
\mathbf{1}\{\beta>0\}
\qquad\text{for every }\beta\neq 0.
\]

\end{document}